\documentclass[twocolumn]{autart}    

\usepackage{graphicx}          
\usepackage{cite}
\usepackage[round, sort]{natbib}
\usepackage{booktabs}
\usepackage{amsmath,amssymb,amsfonts}
\usepackage{graphicx}
\usepackage{subfigure}

\usepackage{textcomp}
\usepackage{color}
\usepackage{enumerate}
\usepackage{makecell}
\usepackage{xcolor}
\usepackage{bm}
\usepackage{amstext}
\usepackage{stfloats}
\usepackage{subfigure}
\newtheorem{remark}{Remark}{}
\newtheorem{theorem}{Theorem}
\newtheorem{proposition}{Proposition}
\newtheorem{lemma}{Lemma}
\newtheorem{definition}{Definition}

\usepackage{mathrsfs}
\usepackage{multirow}
\usepackage{diagbox}
\usepackage{algorithm}
\usepackage{algpseudocode}
\allowdisplaybreaks[4]
\usepackage{cases}
\usepackage{subeqnarray}

\def\BibTeX{{\rm B\kern-.05em{\sc i\kern-.025em b}\kern-.08em
		T\kern-.1667em\lower.7ex\hbox{E}\kern-.125emX}}

\begin{document}

\begin{frontmatter}

\title{Variable-Horizon Model Predictive Control for Switched Systems}

\thanks[footnoteinfo]{This paper was not presented at any IFAC 
meeting. Corresponding author Yijing Wang.}
	\thanks{This work was supported by the National Natural Science Foundation of China (No. 62673349  and No. 62633015)  and the Hong Kong RGC--China NSFC Project N\_CityU133/2.}
	
\author[Paestum]{Rui Zhao}\ead{ruizhao@tju.edu.cn; ruzhao@cityu.edu.hk},    
\author[Rome]{Zhiqiang Zuo}\ead{zqzuo@tju.edu.cn},               
\author[Baiae]{Yang Shi}\ead{yshi@uvic.ca},  
\author[Rome]{Yijing Wang}\ead{yjwang@tju.edu.cn},
\author[Rome]{Zheng Li}\ead{zhengl@tju.edu.cn},
\author[Paestum]{Guanrong Chen}\ead{eegchen@cityu.edu.hk}

\address[Paestum]{Department of Electrical Engineering, City University of Hong Kong, Hong Kong SAR, P. R. China}  
\address[Rome]{Tianjin Key Laboratory of Intelligent Unmanned Swarm Technology and System, School of Electrical and Information Engineering, Tianjin University, Tianjin 300072, China}             
\address[Baiae]{Department of Mechanical Engineering, University of Victoria, Victoria, BC V8W 2Y2, Canada}        

\begin{keyword}                           
Switched system; Model predictive control; Variable-horizon, Domain of attraction; Dwell-time constraint               
\end{keyword}                             

\begin{abstract}                          
	This paper investigates model predictive control (MPC) for switched systems subject to control and state constraints. A variable-horizon switched MPC approach is proposed. By steering the system state into a well-designed switching feasible set, the proposed method structurally decouples the dwell-time conditions from the MPC constraints, thereby relaxing the dwell-time requirements to match those of the unconstrained switched systems.   Furthermore, algorithms are developed to construct this switching feasible set and characterize the domain of attraction, ensuring both persistent feasibility and closed-loop asymptotic stability.   To further decouple the prediction horizon length from strict dwell-time bounds, advanced short-horizon switched MPC schemes are designed, which expand the overall  domain of attraction. Simulations illustrate the efficacy of the proposed methods.
\end{abstract}

\end{frontmatter}

\section{Introduction}

A switched system consists of a family of subsystems and a logical rule that manages which subsystem is activated (\cite{DT}). In real-world applications, abrupt changes in structural parameters frequently occur because of intrinsic dynamics or external disturbances. Moreover, many systems operate in highly complex environments, exhibiting pronounced switching behaviors. Consequently, switched systems have emerged as a powerful modeling tool for a variety of applications such as bipedal robots (\cite{4343995}), aircraft engines (\cite{APP4}), smart grids (\cite{jaramillo2024dwell}), and so on.
It is well known that the stability of switched systems hinges on both the dynamics of the individual subsystems and the switching law. The switching signal (time-dependent, state-dependent, or hybrid) plays a critical role in determining whether the overall system is stable (\cite{acdef_GA}). Time-dependent switching strategies, such as dwell-time (\cite{hespanha1999stability}), average dwell-time (\cite{hespanha1999stability}), and mode-dependent average dwell-time (\cite{zhao2011stability}), are widely used to guarantee the system stability. 

Another challenge in control design is the presence of physical constraints  on control inputs and system states.  Model predictive control (MPC) addresses this challenge by predicting  the plant's behavior over a finite horizon and computing control actions that optimize a predefined performance index subject to these constraints (\cite{shi2021advanced}).
Since MPC relies on anticipated state evolution, its implementation generally requires accurate system models and switching signal information.

Most early MPC schemes for switched systems assume that the switching signal is \textcolor{black}{known \textit{a priori}} (\cite{mhaskarPredictiveControlSwitched2005}) or consider state-dependent switching  (\cite{abbasiRobustTubebasedMPC2021}). 
\textcolor{black}{Recently, infinite-horizon MPC laws have been co-designed with average dwell-time policies  utilizing linear matrix inequality techniques (\cite{nodoziLMIbasedModelPredictive2017}). Because the prediction horizon is inherently infinite, these approaches implicitly circumvent transient feasibility issues by leveraging  global Lyapunov-theoretic invariant sets. 
Stabilizing MPC without terminal constraints has also been investigated (\cite{knuferStabilizingModelPredictive2016}). Recent advances in this controlled switching paradigm seek to jointly optimize the control input and the switching signals under dwell-time constraints (\cite{Chen2022SwitchedMPC, zhuangModelPredictiveControl2023}). Although  effective, these active switching strategies typically require solving  computationally demanding mixed-integer optimization problems.}

In contrast, for strictly {exogenous} switching scenarios where the switching signal is unknown in advance and dictated by external environments, the controller cannot co-optimize the switching logic but must survive it. Guaranteeing the persistent feasibility of finite-horizon MPC under exogenous switching is notoriously challenging, as the pre- and post-switching models may differ drastically. For instance, while \cite{bridgemanStabilityFeasibilityMPC2016} developed a variable terminal set framework to ensure recursive feasibility, it strictly requires the switching sequence to be known in advance. When the switching is unknown, a key requirement is that the state at each switching instant must successfully transition into the feasible region of the subsequent mode. As demonstrated by \cite{zhangSwitchedMPCClass}, this creates a circular dependency: the dwell-time requirement depends on the switching-instant state, yet the latter must itself satisfy the constraints dictated by the dwell-time condition.

To break this circular dependency, the foundational work of \cite{zhang_switched_2016} established rigorous links among feasibility, attractivity, and stability via backward reachable set analysis. By enforcing rigid multi-step geometric intersections, the system state is compelled into the one-step reachable set at switching instants. While recent works of \cite{tanModelPredictiveControl2024} have extended this framework to broader contexts, the underlying geometric mechanism remains identical. The price of this fixed-horizon geometric approach is twofold: (i) the computational complexity grows with the system dimension and the number of modes, and (ii) the resulting dwell-time constraints are excessively restrictive.

Taking a different approach, a min-max switching MPC scheme developed by \cite{OngMPC} exploits a truncated admissible switching sequence. However, this min-max paradigm suffers from severe online computational demands and inherently yields a significantly restricted domain of attraction. \textcolor{black}{Alternatively, other recent strategies ensure recursive feasibility by constructing switch-robust control invariant (switch-RCI) sets  (\cite{Danielson2019_switchRCI}). \cite{Lavaei2020TubeMPC} further extended this idea to handle bounded disturbances under a tube-based framework. However, these studies focus on constraint enforcement rather than asymptotic convergence.}

%

\color{black}
Consequently, existing exogenous switching frameworks are highly difficult to extend to higher-dimensional or large-scale systems, owing to overly conservative offline dwell-time conditions, prohibitive online computational complexity, or the lack of rigorous stability guarantees.
To address the above-discussed challenging issues, this  paper proposes a \textit{variable-horizon} switched MPC strategy for switched systems subject to mode-dependent dwell-time constraints. The contributions of this paper are three-fold:\\
{\color{black}
	i) A novel \textit{variable-horizon} switched MPC framework is proposed, which structurally decouples the dwell-time conditions from the stringent MPC state and input constraints.  By actively steering the system state into a predefined switching feasible set, this framework aligns the dwell-time constraints with those required for unconstrained switched systems.
	Compared to the method of \cite{zhang_switched_2016}, which enforces strict dwell-time constraints to force persistent feasibility, our technique significantly reduces conservatism and provides structurally simpler conditions.\\	
	ii) Effective algorithms are developed to construct the switching feasible set and explicitly characterize the domain of attraction. Empowered by the design of the \textit{variable-horizon} mechanism and the switching feasible set, the proposed method guarantees persistent feasibility at switching instants alongside closed-loop asymptotic stability. Unlike the exhaustive min-max paradigm of \cite{OngMPC}, our approach requires a shorter prediction horizon and eliminates additional input restrictions, yielding a larger feasible region. Furthermore, it overcomes the theoretical limitations of \cite{Danielson2019_switchRCI} by  providing asymptotic stability guarantees.\\	
	iii) Short-horizon switched MPC schemes are used to decouple the required prediction horizon length from dwell-time. By tailoring the dwell-time constraint specifically for the initial mode transition, these schemes circumvent the geometric limitations of short horizons. Enlarging initialization dwell time expands the domain of attraction without sacrificing the relaxed dwell-time benefits for subsequent transitions.
}

Notation: For a matrix $A$, $A^\top$ denotes its transpose. The sets of real numbers and positive integers are denoted by $\mathbb{R}$ and $\mathbb{Z}_+$ respectively. The discrete interval of integers from $a$ to $b$ is represented by $\mathbb{Z}_{[a,b]}$. For a vector $x$ and positive definite matrix $P$, the weighted Euclidean norm is defined as $\|x\|_P \triangleq \sqrt{x^\top P x}$. Symbols $\succ$ and $\succeq$ imply positive definite and positive semi-definite. \textcolor{black}{A sequence of variables from index $a$ to $b$ is denoted by $\{x_i\}_{i=a}^{b} \triangleq \{x_a, x_{a+1}, \dots, x_b\}$.}

The remainder of this paper is organized as follows. Section \ref{sec2} gives some preliminaries and formulates the variable-horizon switched MPC problem. Section \ref{sec:feasibility} develops the algorithm for the switching feasible set and establishes the persistent feasibility of the proposed approach. Section \ref{sec:stability} computes the domain of attraction and proves the asymptotic stability of the system under the derived mode-dependent dwell-time constraints. To further mitigate the restrictive impact of prediction horizons on dwell-time constraints, Section \ref{sec:short} proposes short-horizon switched MPC schemes. Numerical simulations are presented in Section \ref{sec:sim} to support the obtained results. Finally, we conclude this paper in Section \ref{sec:col}. \textcolor{black}{To facilitate understanding, the overall framework and the logical relationships among these sections are illustrated by Fig. \ref{fig:para}.}


\begin{figure}
	\centering
	\includegraphics[width=1\linewidth]{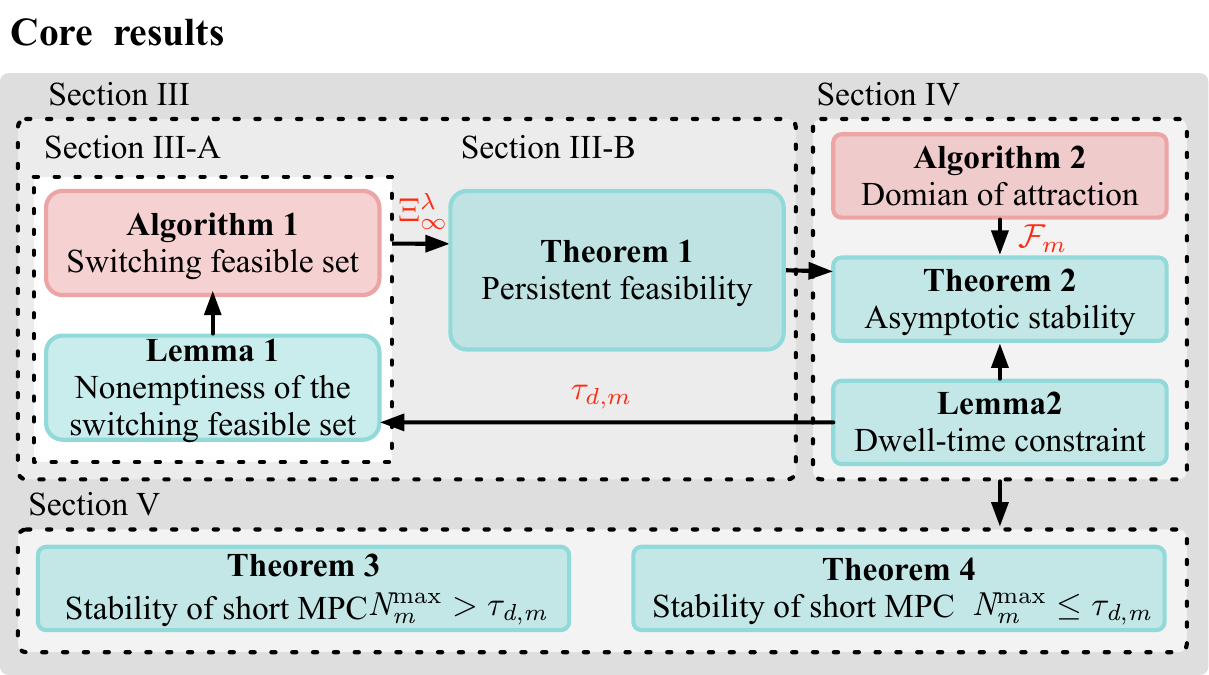}
	\caption{Schematic illustrating the relation of the presented algorithms and theories.}
	\label{fig:para}
\end{figure}

\section{Problem Formulation}\label{sec2}
Consider a class of discrete-time switched linear systems:
\begin{equation}\label{state}
	(\Omega_{\sigma(k)}): x_{k+1} = A_{\sigma(k)} x_k + B_{\sigma(k)} u_k,
\end{equation}
where $x_k \in \mathbb{R}^{n_x}$ denotes the system state and $u_k \in \mathbb{R}^{n_u}$ represents the control input. $\sigma(k) \in \mathcal{M} = \{1, \cdots, M\}$ is the switching signal with $M > 1$ being the number of subsystems. Here the switching signal $\sigma(k)$ is unknown in advance.  Assume that \((A_m,B_m)\) is stabilizable for every \(m\).
The state and input are subject to mode-dependent constraints: $x_k\in \mathbb{X}_m$ and $u_k \in \mathbb{U}_m$, where $\mathbb{X}_m \subseteq \mathbb{R}^{n_x}$ and $\mathbb{U}_m \subseteq \mathbb{R}^{n_u}$ are compact polyhedral sets containing the origin in their interiors. 


\begin{definition}[\cite{dehghan2013computations}]\label{def:dwell}
	For any two consecutive switching instants $k_{s}$ and $k_{s+1}$ with $s \in \mathbb{N}_{+}$, if $k_{s+1}-k_s \geq \tau_{d,m} > 0$ where $m = \sigma(k_s^+)\in \mathcal{M}$, then $\tau_{d,m}$ is called the mode-dependent dwell-time for the $m$-th subsystem. 
\end{definition}



\begin{figure*}[b]
	\centering
	\includegraphics[width=\linewidth]{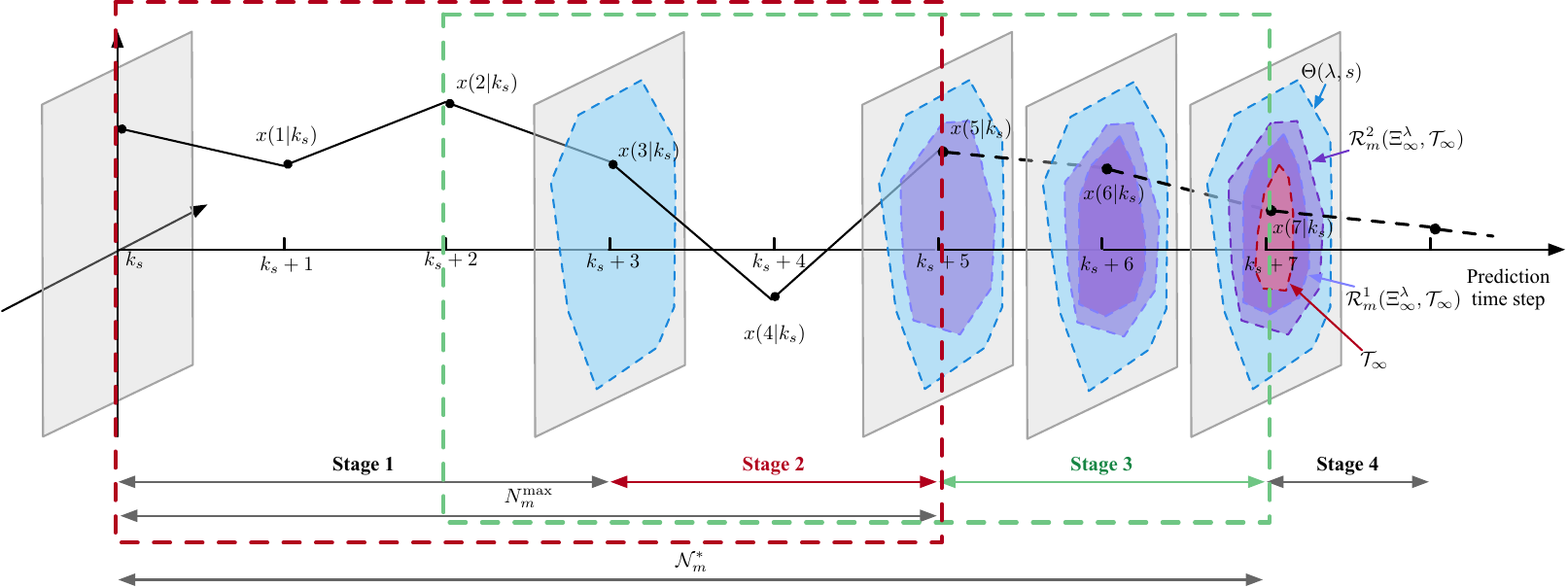}
	\caption{Prediction paradigm of the variable-horizon switched MPC at a switching instant $k_s$ (left) and $k_s+3$ (right) with $N^{\max}_m = 5$ and $N^{\min}_m= 2$.
		At $k_s$, the control sequence steers the state into the \textcolor{black}{contracted} switching feasible set $\Theta(\lambda,s)$ within $N_m^{\max}-N_m^{\min}$ steps, maintains it therein, and finally drives it into the terminal set $\mathcal{T}_\infty$. At $k_s+3$, the horizon reaches its minimum, reducing to a fixed-horizon MPC.
	}
	\label{fig:paradigm}
\end{figure*}

Let $\mathcal{D}(k) = k - k_s$ be the duration of the current mode $m = \sigma(k)$, where $k_s$ is the latest switching instant. The variable prediction horizon is defined as
\begin{equation}\label{equ_N}
	N_m(k,\mathcal{D}(k))= \max\left\{ N_m^{\max} - \mathcal{D}(k), N_m^{\min} \right\},
\end{equation}
where $N_m^{\min}$ and $N_m^{\max}$ are the minimum and maximum horizons, respectively. For brevity, let $\mathcal{N}_k^m \triangleq N_m(k,\mathcal{D}(k))$.  Define $\Delta N_m \triangleq N_m^{\max} - N_m^{\min}$ as the horizon range.

Cost function for $m \in \mathcal{M}$ is $	J_m(x_k, \bm{u}_k) \triangleq  T_m(x_{\mathcal{N}_k^m |k})+  \sum_{i=0}^{\mathcal{N}_k^m -1} \ell_m(x_{i|k}, u_{i|k})$. $\bm{u}_k \triangleq \{u_{0|k}, \ldots, u_{\mathcal{N}_k^m -1|k}\}$ denotes the predicted control sequence.  $u_{i|k}$ is the predicted control input at instant $k+i$. 
The symbol $x_{i|k}$ for $i \in \{1,\ldots,\mathcal{N}_k^m \}$ represents the state at instant $k+i$ under the control sequence $\bm{u}_k$. The cost functions are $	\ell_m(x_{i|k}, u_{i|k}) = \| x_{i|k} \|_{Q_m}^2 + \| u_{i|k} \|_{R_m}^2,$ $
T_m(x_{\mathcal{N}_k^m|k}) = \| x_{\mathcal{N}_k^m|k} \|_{P_m}^2,$
where $Q_m \in \mathbb{R}^{n_x \times n_x} \succ 0$, $R_m \in \mathbb{R}^{n_u \times n_u} \succ 0$ and $P_m \in \mathbb{R}^{n_x \times n_x} \succ 0$ are the state,  input  and terminal state weighting matrices, respectively.  \textcolor{black}{Here, to obtain the terminal-cost decrease used in the stability analysis, the terminal weighting matrix $P_m$ is chosen as the positive definite matrix solution to the DARE: $P_m = A_m^{\top} P_m A_m - A_m^{\top} P_m B_m (R_m + B_m^{\top} P_m B_m)^{-1} B_m^{\top} P_m A_m + Q_m$.  Let $K_{m}$ be the controller gain for the $m$-th subsystem in the form of $K_m = -(R_m + B_m^T P_m B_m)^{-1} B_m^T P_m A_m.$ Additionally, let $\overline{A}_m \triangleq A_m + B_m K_m$. }

The model predictive control optimization problem is formulated as follows:
\begin{subequations}\label{MPC}
	\begin{align}
		&\bm{u}_k^* = \arg\min_{\bm{u}_k} J_m(x_k, \bm{u}_k), \\
		\text{s.t.} ~ &x_{i+1|k} = A_m x_{i|k} + B_m u_{i|k}, ~ x_{0|k} = x_k, \label{MPC_c1} \\
		&x_{i|k} \in \mathbb{X}_m, ~ u_{i|k} \in \mathbb{U}_m, ~ \forall i \in \mathbb{Z}_{[0,~\mathcal{N}_k-1]}, \label{MPC_c2} \\
		&x_{i|k} \in \Theta(\lambda,s), ~ \forall i \in \mathbb{Z}_{[\mathcal{N}_k-N_m^{\min},~\mathcal{N}_k-1]}, \label{MPC_c3} \\
		&x_{\mathcal{N}_k|k} \in \mathcal{T}_{\infty}, \label{ter}
	\end{align}
\end{subequations}
where $\bm{u}_k^*$ stands for the optimal control sequence for the current mode $m = \sigma(k)$. $\Theta(\lambda,s)$ denotes the \textcolor{black}{contracted switching feasible set} parameterized by $\lambda$ with $s$ indicating the number of switches. $\mathcal{T}_{\infty}$ represents the terminal constraint set presented in \cite{Chong_2012}. Due to space limitations, this part is omitted here.

Fig. \ref{fig:paradigm} illustrates the multi-mode prediction paradigm for a second-order system at a switching instant $k_s$ and three steps later at $k_s+3$ (equivalent to a fixed-horizon MPC). The gray, blue, and red areas represent the state constraint $\mathbb{X}_{\sigma(k_s)}$, the  \textcolor{black}{contracted switching feasible set} $\Theta(\lambda,s)$, and the terminal set $\mathcal{T}_\infty$, respectively. Resetting the horizon to $N_{m}^{\max}$ at each switch guarantees that the state enters the  \textcolor{black}{contracted switching feasible set } $\Theta(\lambda, s)$ within $N_{m}^{\max} - N_{m}^{\min}$ steps and reaches $\mathcal{T}_\infty$ within $N_{m}^{\max}$ steps. The detailed step-by-step evolution of the variable horizon and the corresponding state constraints are summarized in Table \ref{table1}.

\begin{table}
	\centering
	\caption{Prediction horizon and state constraints at each instant and predicted step with $N^{\max}_m = 5$ and $N^{\min}_m = 2$ (see \eqref{MPC})}
	\label{table1}
	\renewcommand{\arraystretch}{1.2} 
	\begin{tabular}{c *{4}{c}}  
		\toprule
		& $k_s$ & $k_s+1$ & $k_s+2$ & $[k_s+3,k_{s+1})$ \\
		\midrule
		\textbf{Horizon} & 5 & 4 & 3 & 2 \\
		\textbf{Step 1} & $\mathbb{X}_m$ & $\mathbb{X}_m$ & $\Theta(\lambda,s)$ & $\Theta(\lambda,s)$ \\
		\textbf{Step 2} & $\mathbb{X}_m$ & $\Theta(\lambda,s)$ & $\Theta(\lambda,s)$ & $\mathcal{T}_\infty$ \\
		\textbf{Step 3} & $\Theta(\lambda,s)$ & $\Theta(\lambda,s)$ & $\mathcal{T}_\infty$ & N/A \\
		\textbf{Step 4} & $\Theta(\lambda,s)$ & $\mathcal{T}_\infty$ & N/A & N/A \\
		\textbf{Step 5} & $\mathcal{T}_\infty$ & N/A & N/A & N/A \\
		\bottomrule
	\end{tabular}\vspace{-6pt}
\end{table}

When $\mathcal{D}(k) = \Delta N_m$, the proposed MPC problem \eqref{MPC} reduces to a fixed-horizon MPC strategy
\begin{subequations}\label{fhMPC}
	\begin{align}
		\bm{u}_k^* &= \arg \min_{\bm{u}_k} J_m(x_k, \bm{u}_k), \\
		\text{s.t. } &x_{i+1|k} = A_m x_{i|k} + B_m u_{i|k}, ~x_{0|k} = x_k, \\
		&x_{i|k} \in \Theta(\lambda, s), ~u_{i|k} \in \mathbb{U}_m,~ \forall i \in \mathbb{Z}_{[0, N_m^{\min}-1]}, \\
		&x_{N_m^{\min}|k} \in \mathcal{T}_{\infty}.\vspace{-4pt}
	\end{align}
\end{subequations}
Here, the state constraints are simplified to $\Theta(\lambda, s)$, and the terminal state strictly enters $\mathcal{T}_{\infty}$ at step $N_m^{\min}$.
As illustrated by Stage 2 in Fig. \ref{fig:paradigm},  constraint \eqref{MPC_c3} represents the core mechanism of the variable-horizon scheme.
It is exactly this mechanism that guarantees persistent feasibility at future switching instants as will be rigorously analyzed in Section \ref{sec:feasibility}.

\vspace{-2pt}

The primary objectives of this paper are threefold: 1) To design the \textcolor{black}{contracted switching feasible set} $\Theta(\lambda, s)$; 2) To ensure persistent feasibility; 3) To compute the domain of attraction that guarantees stability for the \textit{variable-horizon} switched MPC problem \eqref{MPC}.

\vspace{-4pt}
\section{Persistent Feasibility}\label{sec:feasibility}

Persistent feasibility is a fundamental challenge in the MPC problem. 
For non-switched systems, it is guaranteed by driving the terminal state into a positive invariant set. However, unknown future switching signals render this standard approach invalid for switched systems, where feasibility is often vulnerable at the exact switching instants.
Motivated by the capability of variable-horizon MPC to steer states into a target set in finite time (\cite{variable_horizon}), this section rigorously addresses this challenge. 
By incorporating dwell-time constraints, a dedicated \textit{contracted switching feasible set} $\Theta(\lambda,s)$ is constructed , which effectively reformulates the feasibility requirement at switching instants and guarantees persistent feasibility for the proposed MPC problem \eqref{MPC}.

\subsection{Switching Feasible Set}\vspace{-4pt}
Let $\mathcal{P}_m(\mathcal{S}) \triangleq \{ x \in \mathbb{X}_m \mid \exists u \in \mathbb{U}_m, A_m x + B_m u \in \mathcal{S} \}$ denote the one-step backward reachable set (pre-set) for subsystem $m$. Its $l$-step recursive application is denoted by $\mathcal{P}_m^l(\mathcal{S})$, with $\mathcal{P}_m^0(\mathcal{S}) \triangleq \mathcal{S}$. Define  $\mathbb{T}_0^m \triangleq \{0, 1, \cdots, \tau_{d,m} - 1\}.$ 
\vspace{-5pt}

\begin{algorithm}
	\caption{Computation of  switching feasible set}\label{result_alg_sfs}
	\begin{algorithmic}[1]
		\Require ~~ $\lambda$, ${A}_m$, $B_m$, ${\mathbb{X}}_m$, $\mathbb{U}_m$, $\mathcal{T}_{\infty}$, $N^{\max}_m$ and $N^{\min}_m$
		\Ensure~~ ${\Xi}_{\infty}^{\lambda}$
		\State Set $k=0$ and let $\Xi_0^{\lambda} = \cap _{m\in \mathcal{M}}  {\mathbb{X}}_m  \cap _{t\in \mathbb{T}_0^m} \mathcal{P}^t_m({\mathbb{X}}_m) $\label{A2_step_1}
		\While{$\Xi_{k+1}^{\lambda} \neq \Xi_k^{\lambda}$}\label{A2_step_2}
		\State $\Xi_{k+1}^\lambda = \Xi_k^\lambda$
		\For{$m\in \mathcal{M}$}\label{A2_step_4}
		\State ${\mathcal{W}_0} = \mathcal{T}_\infty$\label{A2_step_5}
		\For{$t$ = $1$ to $N^{\min}_m$}\label{A2_step_6}
		\State ${\mathcal{W}} _t=    \mathcal{P}^1_m ( {\mathcal{W}_{t-1}})\cap  \lambda \Xi_{k}^{\lambda}$\label{A2_step_7}
		\EndFor \label{A2_step_8}
		\State		$\Xi_{k+1}^{\lambda} =  \Xi_{k+1}^\lambda \cap \mathcal{P}^{N^{\max}_m-N_{m}^{\min}}_m( {\mathcal{W}}_{N_m^{\min}})$ \label{A2_step_10}
		\EndFor 
		\State $k= k+1$
		\EndWhile
		\State$\Xi_{\infty}^{\lambda}= \Xi_{k+1}^{\lambda}$ 
	\end{algorithmic}
\end{algorithm}

\vspace{-5pt}

	\textit{Algorithm \ref{result_alg_sfs}} systematically constructs the switching feasible set, thereby formally guaranteeing the persistent feasibility of the proposed  MPC scheme. Specifically, Step \ref{A2_step_1} initializes the set to enforce state constraints up to the dwell-time boundary. The iterative backward reachability in Steps \ref{A2_step_4}--\ref{A2_step_10} rigorously ensures that for any $x_k \in \Xi_{\infty}^{\lambda}$, an admissible control sequence exists such that the predicted trajectory: (i) satisfies the common state constraints $\bigcap_{m \in \mathcal{M}} \mathbb{X}_m$ during the initial steps $i \in [1, N_m^{\max} - N_m^{\min} - 1]$; (ii) remains strictly within the contracted feasible set $\lambda \Xi_{\infty}^{\lambda}$ during the interval $i \in [N_m^{\max} - N_m^{\min}, N_m^{\max} - 1]$; and (iii) reaches the terminal set $\mathcal{T}_{\infty}$ exactly at step $N_m^{\max}$.

\begin{proposition}\label{result_properties}
The set $\Xi_{\infty}^\lambda$ generated by \textit{Algorithm \ref{result_alg_sfs}} satisfies: (i) ${\Xi}_{\infty}^\lambda \subseteq {\mathbb{X}}_m$ and $\Xi_{j+1}^\lambda \subseteq \Xi_j^\lambda$, $\forall j \geq 1$; (ii) $\Xi_{\infty}^\lambda$ is compact and contains the origin; (iii) $\Xi_\infty ^{\overline{\lambda}} \subseteq \Xi_\infty ^{{\lambda}} $ for $\overline{\lambda} \leq \lambda \leq 1$; and (iv) $\mathcal{T}_{\infty} \subseteq \Xi_{\infty}^\lambda$.
\end{proposition}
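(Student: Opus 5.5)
The plan is to argue directly from the recursion in \textit{Algorithm \ref{result_alg_sfs}}, using three standard facts about the pre-set operator $\mathcal{P}_m$:
\begin{itemize}
\item it is monotone, i.e.\ $\mathcal{S}_1\subseteq\mathcal{S}_2$ implies $\mathcal{P}_m(\mathcal{S}_1)\subseteq\mathcal{P}_m(\mathcal{S}_2)$;
\item it maps compact polyhedra to compact polyhedra, since it is the projection of the polytope $\{(x,u)\in\mathbb{X}_m\times\mathbb{U}_m : A_mx+B_mu\in\mathcal{S}\}$;
\item it preserves invariance: if $\mathcal{S}$ is control invariant for mode $m$ under $\mathbb{U}_m$ and $\mathcal{S}\subseteq \mathbb{X}_m$, then $\mathcal{S}\subseteq\mathcal{P}_m(\mathcal{S})$.
\end{itemize}
I will also use the fact, taken from the construction of $\mathcal{T}_\infty$ in \cite{Chong_2012}, that $\mathcal{T}_\infty$ is a compact polyhedral set containing the origin, invariant for every mode under the gains $K_m$, admissible for all $\mathbb{X}_m,\mathbb{U}_m$, and not depending on $\lambda$.

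For (i): by construction $\Xi_{k+1}^\lambda$ is initialized as $\Xi_k^\lambda$ and then only intersected with further sets, so $\Xi_{k+1}^\lambda\subseteq\Xi_k^\lambda$ for every iteration. Since $\Xi_0^\lambda\subseteq\bigcap_m\mathbb{X}_m$, every iterate, and hence $\Xi_\infty^\lambda$, lies in each $\mathbb{X}_m$.

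For (ii): each iterate is a finite intersection of compact polyhedra, so it is a compact polyhedron. The nested sequence terminates at $\Xi_\infty^\lambda$, which is therefore compact. That it contains the origin will follow from (iv), because $0\in\mathcal{T}_\infty$.

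For (iv), which I regard as the central step, I argue by induction on $k$ that $\mathcal{T}_\infty\subseteq\Xi_k^\lambda$.
\begin{itemize}
\item \emph{Base case.} Invariance of $\mathcal{T}_\infty$ in $\mathbb{X}_m$ gives $\mathcal{T}_\infty\subseteq\mathcal{P}_m^t(\mathbb{X}_m)$ for all $t$, so $\mathcal{T}_\infty\subseteq\Xi_0^\lambda$.
\item \emph{Inductive step.} Assume $\mathcal{T}_\infty\subseteq\Xi_k^\lambda$. The inner loop needs $\mathcal{T}_\infty\subseteq\mathcal{W}_t$, which requires $\mathcal{T}_\infty\subseteq\lambda\Xi_k^\lambda$, not just $\mathcal{T}_\infty\subseteq\Xi_k^\lambda$.
\end{itemize}
This is the main obstacle. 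There are two ways I would try to handle it:
\begin{itemize}
\item Use that $\mathcal{T}_\infty$ is built in \cite{Chong_2012} as a $\lambda$-contractive set, so that $\mathcal{T}_\infty\subseteq\lambda\mathcal{T}_\infty'$ for a larger invariant set $\mathcal{T}_\infty'$ that is itself contained in every iterate. The induction would then run on $\mathcal{T}'_\infty$.
\item Prove the stronger claim $\mathcal{T}_\infty\subseteq\lambda\Xi_k^\lambda$ by the same induction, using $\mathcal{T}_\infty\subseteq\lambda\,\Xi_0^\lambda$. This is an assumption that the terminal set is chosen small enough relative to the constraint sets, which I will state explicitly.
\end{itemize}
Granting this, $\mathcal{W}_0=\mathcal{T}_\infty\subseteq\mathcal{P}_m(\mathcal{T}_\infty)\cap\lambda\Xi_k^\lambda$. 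Monotonicity then gives $\mathcal{T}_\infty\subseteq\mathcal{W}_t$ for all $t\le N_m^{\min}$. Applying $\mathcal{P}_m$ a further $\Delta N_m$ times and using invariance again gives $\mathcal{T}_\infty\subseteq\mathcal{P}^{\Delta N_m}_m(\mathcal{W}_{N_m^{\min}})$. Hence $\mathcal{T}_\infty\subseteq\Xi_{k+1}^\lambda$, which closes the induction.

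For (iii): fix $\overline\lambda\le\lambda\le 1$ and prove $\Xi_k^{\overline\lambda}\subseteq\Xi_k^\lambda$ by induction on $k$.
\begin{itemize}
\item \emph{Base case.} $\Xi_0$ does not depend on $\lambda$.
\item \emph{Inductive step.} Assume $\Xi_k^{\overline\lambda}\subseteq\Xi_k^\lambda$. Then $\overline\lambda\,\Xi_k^{\overline\lambda}\subseteq\lambda\,\Xi_k^{\overline\lambda}\subseteq\lambda\,\Xi_k^{\lambda}$. The first inclusion holds because $\Xi_k^{\overline\lambda}$ is convex and contains $0$, so it is star-shaped, and scaling by a smaller factor shrinks it. Monotonicity of $\mathcal{P}_m$ propagates this through every $\mathcal{W}_t$ and the final pre-set, giving $\Xi_{k+1}^{\overline\lambda}\subseteq\Xi_{k+1}^\lambda$.
\end{itemize}
The limits inherit the inclusion, since both sequences are nested and terminate (or, if they do not terminate, the inclusion passes to their intersections).

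Convexity of all iterates is used here. It holds because pre-sets of convex sets under linear dynamics with convex $\mathbb{U}_m$ are convex, and intersections of convex sets are convex.
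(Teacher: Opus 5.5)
Your arguments for (i)--(iii) unpack the paper's one-line justification (the paper only says the result follows from the construction in Algorithm~\ref{result_alg_sfs}). The genuine gap is in (iv).

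Your second route does not close. From $\mathcal T_\infty\subseteq\lambda\Xi_k^\lambda$ you derive only $\mathcal T_\infty\subseteq\Xi_{k+1}^\lambda$, but the next step needs $\mathcal T_\infty\subseteq\lambda\Xi_{k+1}^\lambda$, and nothing supplies it. In fact your stated hypotheses (invariance, admissibility, $\lambda$-independence, and $\mathcal T_\infty\subseteq\lambda\Xi_0^\lambda$) do not imply (iv). Take two identical scalar modes $x^+=x+u$ with $\mathbb X_m=[-1,1]$, $\mathbb U_m=[-0.1,0.1]$ and $Q_m=R_m=1$. Then $P_m=\phi=(1+\sqrt5)/2$, $K_m=-1/\phi$ and $\overline A_m=1/\phi^2\approx0.382$. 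The maximal admissible invariant set is $\mathcal T_\infty=[-0.1\phi,0.1\phi]$, and the choice $\tau_{d,m}=N_m^{\min}=1$, $N_m^{\max}=2$ satisfies \eqref{equ_dt} and $\Delta N_m\le\tau_{d,m}$. For $\lambda=0.2$ you have $\mathcal T_\infty\subseteq\lambda\Xi_0^\lambda=[-0.2,0.2]$. Yet the iterates are $\Xi_k^\lambda=[-r_k,r_k]$ with $r_{k+1}=\min\{r_k,\ \min\{0.1\phi+0.1,\ \lambda r_k\}+0.1\}$. This gives $r_1=0.3$, $r_2=0.16<0.1\phi$ and $r_k\downarrow0.125$, hence $\mathcal T_\infty\not\subseteq\Xi_\infty^\lambda$.

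Your first route has the same defect. A static inclusion $\mathcal T_\infty\subseteq\lambda\mathcal T'_\infty$ cannot be propagated. Proving $\mathcal T'_\infty\subseteq\Xi_{k+1}^\lambda$ requires steering every point of $\mathcal T'_\infty$ into $\lambda\Xi_k^\lambda$ within $\Delta N_m$ steps, and invariance of $\mathcal T'_\infty$ says nothing about that.

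The missing ingredient is a \emph{dynamic} contraction of $\mathcal T_\infty$ under the local laws, e.g.\ $\overline A_m^{\,\Delta N_m}\mathcal T_\infty\subseteq\lambda\mathcal T_\infty$ for all $m$. With it you never need $\mathcal T_\infty\subseteq\mathcal W_t$, only $\mathcal T_\infty\subseteq\mathcal P_m^{\Delta N_m}(\mathcal W_{N_m^{\min}})$. Keep the unscaled hypothesis $\mathcal T_\infty\subseteq\Xi_k^\lambda$ and apply $u=K_mx$ from $x\in\mathcal T_\infty$:
\begin{itemize}
\item invariance keeps all predicted states in $\mathcal T_\infty$ with admissible inputs;
\item the states at steps $i\in\mathbb Z_{[\Delta N_m,N_m^{\max}-1]}$ lie in $\lambda\,\overline A_m^{\,i-\Delta N_m}\mathcal T_\infty\subseteq\lambda\mathcal T_\infty\subseteq\lambda\Xi_k^\lambda$;
\item the state at step $N_m^{\max}$ lies in $\mathcal T_\infty$.
\end{itemize}
Hence $x\in\mathcal P_m^{\Delta N_m}(\mathcal W_{N_m^{\min}})$ for every $m$, and the induction closes. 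In the example this condition reads $\lambda\ge1/\phi^2$, which is exactly when (iv) holds there.

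So (iv) rests on a $\lambda$-dependent contraction of $\mathcal T_\infty$ over $\Delta N_m$ steps, possibly fewer than $\tau_{d,m}$. This is presumably what the $\lambda$-dependent terminal set invoked in Lemma~\ref{result_nonempty} is meant to provide. Your proof must state and use it; the paper's one-line justification is silent on it too.

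Finally, prove $0\in\Xi_\infty^\lambda$ directly rather than via (iv): $0\in\mathcal T_\infty$, $u=0$ fixes the origin, and $0\in\Xi_k^\lambda$ gives $0\in\lambda\Xi_k^\lambda$. Otherwise (ii), and through your star-shapedness step also (iii), inherit the gap.
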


The proof follows directly from the iterative construction in \textit{Algorithm \ref{result_alg_sfs}} and is thus omitted.

\begin{lemma}\label{result_nonempty}
	If the unconstrained switched system \eqref{state} using controller gain $K_m$ is asymptotically stable under the mode-dependent dwell-time constraints $\{\tau_{d,m}\}$, then there exists a constant $\lambda \in (0,1)$ such that both the switching terminal set $\mathcal{T}_{\infty}$ and the switching feasible set $\Xi_{\infty}^\lambda$ are non-empty.  
\end{lemma}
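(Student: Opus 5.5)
Since the unconstrained closed loop $x_{k+1}=\overline{A}_{\sigma(k)}x_k$ is asymptotically stable under the dwell times $\{\tau_{d,m}\}$, I would first extract a Lyapunov certificate that is robust to switching. Standard dwell-time theory (e.g.\ \cite{dehghan2013computations}) gives a compact convex set $\mathcal{T}$ (polytope or ellipsoid) that contains the origin in its interior and is invariant under the switched closed loop. Concretely, $\overline{A}_m\mathcal{T}\subseteq\mathcal{T}$ need not hold at every step; what holds is $\overline{A}_m^{t}\mathcal{T}\subseteq\mathcal{T}$ for all $t\ge\tau_{d,m}$, together with boundedness of all intermediate iterates. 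From $\mathcal{T}$ I would build the switching terminal set $\mathcal{T}_\infty$ as the maximal admissible set of the switched closed loop. This is the set of states from which every dwell-time admissible switching sequence keeps $x\in\bigcap_m\mathbb{X}_m$ and $K_{\sigma}x\in\mathbb{U}_\sigma$. It is the set from \cite{Chong_2012}.

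Non-emptiness of $\mathcal{T}_\infty$, indeed the fact that it has the origin in its interior, follows by scaling. The constraint sets contain the origin in their interiors, and all closed-loop trajectories starting in $\mathcal{T}$ remain in $c\mathcal{T}$ for a uniform constant $c$, which comes from dwell-time stability. Hence $\epsilon\mathcal{T}$ is admissible for small $\epsilon>0$. Because the system is linear, $\epsilon\mathcal{T}$ is again switching-invariant, so $\epsilon\mathcal{T}\subseteq\mathcal{T}_\infty$. It also follows that $\mathcal{T}_\infty$ is invariant under $\overline{A}_m$ over the dwell window in the sense above, and that $\mathcal{T}_\infty\subseteq\Xi_0^\lambda$. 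The latter holds because from $\mathcal{T}_\infty$ the feedback $K_m$ keeps the state in $\mathbb{X}_m$ for $\tau_{d,m}$ steps, so $\mathcal{T}_\infty\subseteq\mathcal{P}_m^t(\mathbb{X}_m)$ for every $t\in\mathbb{T}_0^m$.

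For $\Xi_\infty^\lambda$, I would show by induction on the iteration index $k$ that $\mathcal{T}_\infty\subseteq\Xi_k^\lambda$ for every $k$. This gives Proposition \ref{result_properties}(iv) together with non-emptiness, since $0\in\mathcal{T}_\infty$. The base case is the inclusion $\mathcal{T}_\infty\subseteq\Xi_0^\lambda$ just established. For the inductive step, assume $\mathcal{T}_\infty\subseteq\Xi_k^\lambda$. We need $\mathcal{T}_\infty\subseteq\mathcal{P}_m^{\Delta N_m}(\mathcal{W}_{N_m^{\min}})$, and the natural witness is $u=K_mx$ applied for all $N_m^{\max}$ steps. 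This works provided the following hold:
\begin{enumerate}[(a)]
\item $\overline{A}_m^{i}\mathcal{T}_\infty\subseteq\mathcal{T}_\infty$ for the relevant $i$;
\item $\mathcal{T}_\infty\subseteq\lambda\Xi_k^\lambda$;
\item the last $N_m^{\min}$ predicted states stay in $\lambda\Xi_k^\lambda$ and the final one lies in $\mathcal{T}_\infty$.
\end{enumerate}
To secure (b), I would choose $\lambda$ so that $\mathcal{T}_\infty\subseteq\lambda\,\mathcal{T}_\infty'$, where $\mathcal{T}_\infty'$ is the larger switching-invariant set obtained by rescaling. Equivalently, I would replace $\mathcal{T}_\infty$ by a shrunken copy $\lambda\tilde{\mathcal{T}}$. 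Then the induction is run with the hypothesis $\tilde{\mathcal{T}}\subseteq\Xi_k^\lambda$, which yields $\lambda\tilde{\mathcal{T}}\subseteq\lambda\Xi_k^\lambda$. In short, $\mathcal{T}_\infty$ is defined as a $\lambda$-contraction of a larger admissible invariant set, and $\lambda\in(0,1)$ is picked so that this contraction still sits inside the constraints. By monotonicity, the limit $\Xi_\infty^\lambda$ then contains $\mathcal{T}_\infty\ni0$.

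The main obstacle is (a). Under dwell-time stability alone, $\mathcal{T}_\infty$ need not be one-step invariant under each $\overline{A}_m$: the guaranteed contraction holds only after $\tau_{d,m}$ steps. My first approach would be to take $\mathcal{T}_\infty$ itself as the union of the orbits $\bigcup_{t\ge0}\overline{A}_m^{t}\mathcal{T}$, intersected over all modes, or equivalently as the sublevel set of a mode-independent Lyapunov-like function defined as a supremum over all admissible future switched trajectories. This set is genuinely forward invariant under $\overline{A}_m$ for every $m$, and it remains compact and contains the origin in its interior because of the uniform bound $c$. With this choice all requirements (a)--(c) follow, and the constant $\lambda$ is fixed by the ratio between this invariant hull and the largest admissible scaling.
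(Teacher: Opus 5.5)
Your first half is fine and is a self-contained substitute for the paper's appeal to Theorem~6 of \cite{Chong_2012}. Dwell-time stability gives a uniform bound, so $\epsilon\mathcal{T}\subseteq\mathcal{T}_\infty$ and hence $0\in\mathcal{T}_\infty$, and the base case $\mathcal{T}_\infty\subseteq\Xi_0^\lambda$ holds. The gap is the inductive step, where you try to prove $\mathcal{T}_\infty\subseteq\Xi_k^\lambda$ for all $k$. The paper takes this inclusion from Proposition~\ref{result_properties}(iv), whose proof it omits.

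Unwinding Algorithm~\ref{result_alg_sfs}, membership in $\mathcal{P}_m^{\Delta N_m}(\mathcal{W}_{N_m^{\min}})$ forces the predicted states with indices $i\in\mathbb{Z}_{[\Delta N_m,\,N_m^{\max}-1]}$ into $\lambda\Xi_k^\lambda$. If the hypothesis is $S\subseteq\Xi_k^\lambda$, that is all you know about $\Xi_k^\lambda$, and you must reach $S\subseteq\Xi_{k+1}^\lambda$. So your $K_m$-witness needs $\overline{A}_m^{\,i}S\subseteq\lambda S$ on that whole window, i.e.\ a contraction of the \emph{same} set by the factor $\lambda$.

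Writing $\mathcal{T}_\infty=\lambda\tilde{\mathcal{T}}$ and inducting on $\tilde{\mathcal{T}}$ does not remove this: the conclusion must again cover all of $\tilde{\mathcal{T}}$, while the target is $\lambda\tilde{\mathcal{T}}$. It also silently replaces the terminal set of \eqref{ter}, which is a fixed input to Algorithm~\ref{result_alg_sfs}, not something you may choose.

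Your item (a) is not the real obstacle. For the terminal step you only need $\overline{A}_m^{\,N_m^{\max}}\mathcal{T}_\infty\subseteq\mathcal{T}_\infty$, which your dwell-window invariance already gives when $N_m^{\max}\ge\tau_{d,m}$. The proposed hull is not invariant anyway. $\overline{A}_{m'}$ maps $\bigcap_m\bigcup_{t\ge0}\overline{A}_m^{\,t}\mathcal{T}$ only into $\bigcup_{t\ge0}\overline{A}_{m'}^{\,t}\mathcal{T}$. Moreover, a compact set with the origin in its interior that is one-step invariant under every $\overline{A}_m$ would certify boundedness under arbitrary switching, which dwell-time stability does not imply.

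No better choice of $\lambda$ or of the set repairs this in general. Only $\Delta N_m\le\tau_{d,m}$ is assumed, so the window may begin before the dwell time, where nothing contracts. In the extreme case $\Delta N_m=0$ (with $N_m^{\min}\ge1$), the update gives $\Xi_{k+1}^\lambda\subseteq\mathcal{W}_{N_m^{\min}}\subseteq\lambda\Xi_k^\lambda$. Hence $\Xi_k^\lambda\subseteq\lambda^k\Xi_0^\lambda$ and the iterates shrink to $\{0\}$, so no set with nonempty interior, $\mathcal{T}_\infty$ included, can lie in $\Xi_\infty^\lambda$.

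What you are missing is that the lemma asks only for non-emptiness, and this is immediate once $0\in\mathcal{T}_\infty$. The pair $x\equiv0$, $u\equiv0$ gives $0\in\Xi_0^\lambda$. If $0\in\Xi_k^\lambda$, then $0\in\lambda\Xi_k^\lambda$. Using $0\in\mathbb{X}_m$, $0\in\mathbb{U}_m$ and $0\in\mathcal{T}_\infty$, it follows that $0\in\mathcal{W}_t$ for every $t$ and therefore $0\in\Xi_{k+1}^\lambda$. Hence $0\in\Xi_\infty^\lambda$.

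Keep your first half, and replace the attempted inclusion $\mathcal{T}_\infty\subseteq\Xi_k^\lambda$ and the hull construction by this short induction on the origin. The proof is then complete.
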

\vspace{-0.4cm}
\begin{pf}
	By \cite[Theorem 6]{Chong_2012}, the stated asymptotic stability guarantees a non-empty set $\mathcal{T}_{\infty}$ for some $\lambda \in (0,1)$. It then follows from Proposition \ref{result_properties} (iv) that $\mathcal{T}_{\infty} \subseteq \Xi_{\infty}^\lambda$, which directly ensures $\Xi_{\infty}^\lambda \neq \emptyset$. \qed
\end{pf}
\vspace{-0.4cm}
\begin{remark}\label{remark_feasible_sw}
While $\Xi_{\infty}^\lambda$ is a subset of $\bigcap_{m\in \mathcal{M}} \mathcal{P}^{N_m^{\max}}_m(\mathcal{T}_{\infty})$, there is no explicit inclusion relationship between the contracted set $\Theta(\lambda,s)$ and the standard $N_m^{\min}$-step domain of attraction $\bigcap_{m\in \mathcal{M}} \mathcal{P}^{N_m^{\min}}_m(\mathcal{T}_{\infty})$. A detailed geometric analysis and visual demonstration of this property will be provided in Section \ref{sec:sim}.
\end{remark}

	{\color{black}
		\begin{remark}\label{remark_danielson_diff}
%
			The switching feasible set   $\Xi_{\infty}^\lambda$ differs from the mode-dependent switch-RCI framework of \cite{Danielson2019_switchRCI}. In \cite{Danielson2019_switchRCI},  each mode-dependent switch-RCI set is required to be one-step control invariant under the corresponding mode; thus, the state must remain in that set once it enters. This stepwise  containment severely restricts the allowable state space during offline computation. In contrast, our approach explicitly couples the set construction with the variable-horizon MPC mechanism. By resetting the prediction horizon at switching instants, the MPC provides  a transient buffer of $\Delta N_m$ steps. This co-design completely bypasses the conservative one-step invariance requirement, replacing it with a multi-step backward reachability condition that explicitly exploits the mode-dependent dwell time. Consequently, intermediate states may   temporarily evolve outside $\Xi_{\infty}^{\lambda}$ (provided that they remain within the physical bounds of $\mathbb{X}_m$). Furthermore, rather than merely guaranteeing the hard constraints in standard RCI frameworks, our scheme guarantees closed-loop asymptotic stability, as the terminal constraint ultimately drives the state into the terminal set $\mathcal{T}_{\infty}$.
		\end{remark}
}

\subsection{Persistent Feasibility }

This subsection proves the persistent feasibility of the proposed \textit{variable-horizon} switched MPC \eqref{MPC}.

\begin{theorem} \label{thm_fea}
	Suppose the MPC problem \eqref{MPC} is feasible at the initial instant, $\Delta N_{m} \leq \tau_{d,m}$ for all $m \in \mathcal{M}$, and $\Theta(\lambda, s)= \lambda^s { \Xi}_{\infty}^{\lambda}$  where $s \in \mathbb{Z}_{[0,\infty]}$ is the switch count. Then, \eqref{MPC} is persistently feasible for all future instants.
\end{theorem}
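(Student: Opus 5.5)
Feasibility will be shown by induction over time, splitting into non-switching instants (where a shifted-sequence argument applies) and switching instants (where the construction of $\Xi_\infty^\lambda$ in \textit{Algorithm \ref{result_alg_sfs}} is invoked). The induction hypothesis at time $k$ in mode $m$ with $s$ switches so far is that \eqref{MPC} is feasible with $\Theta(\lambda,s)=\lambda^s\Xi_\infty^\lambda$.

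\emph{Non-switching step.} Suppose $k+1$ is not a switching instant. If $\mathcal{D}(k)<\Delta N_m$, the horizon drops by one: $\mathcal{N}_{k+1}^m=\mathcal{N}_k^m-1$. The shifted sequence $\{u^*_{1|k},\dots,u^*_{\mathcal{N}_k^m-1|k}\}$ is then feasible. Its predicted states coincide with $x^*_{i+1|k}$, and the index window on which $\Theta(\lambda,s)$ is imposed, $[\mathcal{N}_k-N_m^{\min},\mathcal{N}_k-1]$, shifts by exactly one, consistent with the new window. The terminal state is unchanged, so it still lies in $\mathcal{T}_\infty$. If $\mathcal{D}(k)\ge \Delta N_m$, we are in the fixed-horizon problem \eqref{fhMPC}. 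We shift the sequence and append $K_m x^*_{N_m^{\min}|k}$. This requires $\mathcal{T}_\infty$ to be positively invariant under $\overline{A}_m$, to satisfy $K_m\mathcal{T}_\infty\subseteq\mathbb{U}_m$, and to satisfy $\mathcal{T}_\infty\subseteq \lambda^s\Xi_\infty^\lambda$. I would take these from the construction of \cite{Chong_2012} together with Proposition \ref{result_properties}(iv). The last inclusion for general $s$ needs care: the terminal set of \cite{Chong_2012} is contractive (it is built with factor $\lambda$ across switches), and I would use that property to argue $\mathcal{T}_\infty\subseteq\lambda^s\Xi^\lambda_\infty$.

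\emph{Switching step.} Let $k_{s+1}$ be a switch from $m$ to $m'$. Since $\Delta N_m\le\tau_{d,m}\le k_{s+1}-k_s$, the previous mode has already reached the fixed-horizon phase, so $x_{k_{s+1}}=x^*_{1|k_{s+1}-1}\in\lambda^s\Xi^\lambda_\infty$. The dwell-time step in Step \ref{A2_step_1} is a safeguard for the state constraints in between. We must now show that $x\in\lambda^s\Xi_\infty^\lambda$ admits a sequence for mode $m'$ of horizon $N_{m'}^{\max}$ with the following properties: it stays in $\mathbb{X}_{m'}$, uses inputs in $\mathbb{U}_{m'}$, lies in $\lambda^{s+1}\Xi^\lambda_\infty$ over the last $N_{m'}^{\min}$ steps, and ends in $\mathcal{T}_\infty$. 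For $s=0$ this is exactly the fixed-point property of Step \ref{A2_step_10}: $\Xi_\infty^\lambda\subseteq\mathcal{P}^{\Delta N_{m'}}_{m'}(\mathcal{W}_{N_{m'}^{\min}})$, with $\mathcal{W}$ built using $\lambda\Xi^\lambda_\infty$.

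For $s\ge1$ I would exploit linearity. The sets $\mathbb{X}_{m'}$, $\mathbb{U}_{m'}$ and $\mathcal{T}_\infty$ are convex and contain the origin, and I would additionally use that $\mathcal{T}_\infty$ is homogeneous under scaling by $\lambda^s$ in the needed direction. Under these properties, scaling a feasible sequence for $x/\lambda^s\in\Xi^\lambda_\infty$ by $\lambda^s\le 1$ yields a sequence that satisfies the state and input constraints, lies in $\lambda^{s+1}\Xi^\lambda_\infty$ on the window, and ends in $\lambda^s\mathcal{T}_\infty\subseteq\mathcal{T}_\infty$.

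\emph{Main obstacle.} I expect the hardest step to be this scaling argument together with the inclusion $\mathcal{T}_\infty\subseteq\lambda^s\Xi^\lambda_\infty$ needed for the recursive step in \eqref{fhMPC}. The two pull in opposite directions as $s$ grows. I would try to resolve this using the $\lambda$-contractivity of $\mathcal{T}_\infty$ from \cite{Chong_2012}, presumably that its switching-dependent level shrinks by $\lambda$ per switch. If needed, I would interpret $\mathcal{T}_\infty$ as the corresponding $s$-dependent scaled terminal set, so that both inclusions hold simultaneously.
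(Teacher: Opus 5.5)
Your route is the paper's: shift the optimal sequence while the horizon decreases, shift and append $K_m x$ in the fixed-horizon phase \eqref{fhMPC}, use the fixed point of Algorithm~\ref{result_alg_sfs} at a switch, and induct. Your scaling argument at switches with $s\ge 1$ is sound. Its only extra ingredient is $\lambda^s\mathcal{T}_\infty\subseteq\mathcal{T}_\infty$, which follows from convexity of $\mathcal{T}_\infty$ and $0\in\mathcal{T}_\infty$. It also spells out what the paper compresses into ``by induction''.

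\textbf{The gap.} It is the step you flag and leave open. In \eqref{fhMPC} the appended candidate places the old terminal state $x^*_{N_m^{\min}|k}\in\mathcal{T}_\infty$ at index $N_m^{\min}-1$. There \eqref{MPC_c3} demands membership in $\lambda^s\Xi_\infty^\lambda$.
\begin{itemize}
\item Proposition~\ref{result_properties}(iv) gives this only for $s=0$.
\item The inclusion $\mathcal{T}_\infty\subseteq\lambda^s\Xi_\infty^\lambda$ for all $s$ is false as soon as $\mathcal{T}_\infty\neq\{0\}$. Since $\Xi_\infty^\lambda$ is compact, $\lambda^s\Xi_\infty^\lambda$ shrinks to the origin, while $\mathcal{T}_\infty$ does not depend on $s$.
\item No contractivity property of $\mathcal{T}_\infty$ can help. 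Contractivity constrains where closed-loop trajectories go, not the size of $\mathcal{T}_\infty$ itself.
\end{itemize}
The same issue can appear at the switch itself. Normally $x_{k_{s+1}}=x^*_{1|k_{s+1}-1}$ lies in $\Theta(\lambda,s)$ because index $1$ is in the window at $k_{s+1}-1$. The exception is when the horizon there equals $N_m^{\min}=1$: index $1$ is then the terminal index, and you again need $\mathcal{T}_\infty\subseteq\Theta(\lambda,s)$.

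\textbf{Your fallback.} An $s$-dependent terminal set changes \eqref{ter}, and it still needs an extra hypothesis. After the $(s{+}1)$-th switch, scaling delivers terminal states in $\lambda^s\mathcal{T}_\infty$, while the window set is $\lambda^{s+1}\Xi_\infty^\lambda$. Appending $K_m x$ therefore requires $\lambda^s\mathcal{T}_\infty\subseteq\lambda^{s+1}\Xi_\infty^\lambda$, i.e.\ $\mathcal{T}_\infty\subseteq\lambda\Xi_\infty^\lambda$. This is stronger than Proposition~\ref{result_properties}(iv) and is not established anywhere. With that inclusion, and with $\lambda^{s}\mathcal{T}_\infty$ as the terminal set after the $(s{+}1)$-th switch, your argument does go through.

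\textbf{Comparison with the paper.} The paper's proof does not treat this point at all. It justifies the fixed-horizon step by invariance of $\mathcal{T}_\infty$ alone, never checking \eqref{MPC_c3} at the appended index, and it covers all later switches by ``induction''. You have reproduced the paper's argument and pinpointed its weak step. Still, neither version closes the fixed-horizon step for $s\ge 1$ as the problem is literally stated.
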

\vspace{-0.4cm}
\begin{pf}
	Let $\bm{x}^*_{k_0} = \{x^*_{i|k_0}\}_{i=1}^{\mathcal{N}^{m}_{k_0}}$ and $\bm{u}^*_{k_0}= \{u^*_{i|k_0}\}_{i=0}^{\mathcal{N}^{m}_{k_0}-1}$ be the optimal sequences at $k_0$ for mode $m = \sigma(k_0)$. At instant $k_0+1$, one has $x_{0|k_0+1} = x^*_{1|k_0}$ and $\mathcal{N}^{m}_{k_0+1} = \mathcal{N}^{m}_{k_0}-1$. Hence, the sequences $\hat{\bm{x}}_{k_0+1}= \{x^*_{i|k_0}\}_{i=2}^{\mathcal{N}^{m}_{k_0}}$ and $\hat{\bm{u}}_{k_0+1} = \{u^*_{i|k_0}\}_{i=1}^{\mathcal{N}^{m}_{k_0}-1}$ remain feasible. Iterating this process maintains feasibility over $[k_0, k_0 + \Delta N_m)$. For $k \in [k_0 + \Delta N_m, k_1)$, \eqref{MPC} reduces to the standard fixed-horizon MPC \eqref{fhMPC}. Since the terminal state enters the invariant set $\mathcal{T}_{\infty}$, recursive feasibility is inherently guaranteed by appending the local controller $K_m$.
At the first switch $k_1$, we have $x_{k_1} \in \Theta(\lambda, 0)$. \textit{Algorithm \ref{result_alg_sfs}} guarantees a feasible control sequence exists at this instant. By induction, the problem remains persistently feasible for all future intervals and switching instants. \qed
\end{pf}\vspace{-0.4cm}
\begin{remark}
	Theorem \ref{thm_fea} establishes a direct link between the variable horizon and dwell-time constraints. Since future switching signals are unknown, traditional methods like \cite{zhang_switched_2016} must guarantee feasibility under worst-case scenarios, relying on computationally intensive set operations to derive minimum dwell-times. Our approach reverses this paradigm: it utilizes predefined dwell-times to proactively design the horizon length. By actively steering states into predefined switching feasible set within the prescribed dwell time, the \textit{variable-horizon} mechanism bypasses these complex recursive calculations. This yields a twofold advantage: a significant reduction in computational complexity and a substantial improvement in real-world deployability.
\end{remark}

\section{Stability}\label{sec:stability}
 By Theorem \ref{thm_fea}, persistent feasibility hinges on a non-empty switching feasible set $\Xi_{\infty}^{\lambda}$ and adequate dwell-times.  This section first derives the dwell-time conditions ensuring $\Xi_{\infty}^{\lambda} \neq \emptyset$. We then characterize the domain of attraction and establish asymptotic stability.

Since $K_m$ stabilizes each subsystem, there exist $\rho_m > 0$ and $\lambda_m \in (0, 1)$ such that $\| (A_m + B_m K_m)^n \| \leq \rho_m \lambda_m^n$ for all $m \in \mathcal{M}$.

\begin{lemma}\label{result_lemme_dt}
	The unconstrained switched system \eqref{state} under controllers $u_k = K_{\sigma(k)} x_k$ is asymptotically stable if the dwell-times $\tau_{d,m}$ satisfy
	\begin{equation}\label{equ_dt}
		\rho_m \lambda_m^{\tau_{d,m}} < 1, \quad \forall m \in \mathcal{M}.
	\end{equation}
\end{lemma}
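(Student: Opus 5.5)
The plan is to work directly with the closed-loop transition matrices and use submultiplicativity of the induced norm. Under $u_k = K_{\sigma(k)} x_k$ the unconstrained system becomes $x_{k+1} = \overline{A}_{\sigma(k)} x_k$. Let $k_0 < k_1 < k_2 < \cdots$ be the switching instants. On $[k_s, k_{s+1})$ mode $m_s \triangleq \sigma(k_s)$ is active, so $x_{k_{s+1}} = \overline{A}_{m_s}^{L_s} x_{k_s}$ with $L_s = k_{s+1} - k_s \geq \tau_{d,m_s}$ by Definition \ref{def:dwell}. The first interval before any switch, if it is shorter than the dwell time, only contributes a bounded factor (at most $\rho_{m_0}$) and can be absorbed into the constant.

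Step 1 is a per-interval contraction. Since $L_s \geq \tau_{d,m_s}$ and $\lambda_{m_s} \in (0,1)$, the exponential bound assumed just before the lemma gives
\begin{equation*}
\|\overline{A}_{m_s}^{L_s}\| \leq \rho_{m_s}\lambda_{m_s}^{\tau_{d,m_s}}\lambda_{m_s}^{L_s-\tau_{d,m_s}} \leq c,
\end{equation*}
where $c \triangleq \max_{m\in\mathcal{M}} \rho_m \lambda_m^{\tau_{d,m}}$. Condition \eqref{equ_dt} and finiteness of $\mathcal{M}$ give $c < 1$. Hence $\|x_{k_{s}}\| \leq \bar\rho\, c^{s-1} \|x_{k_0}\|$ for $s \geq 1$, with $\bar\rho \triangleq \max\{1, \max_m \rho_m\}$ covering the initial interval.

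Step 2 controls the state inside an interval. For $k \in [k_s, k_{s+1})$ we have $\|x_k\| \leq \|\overline{A}_{m_s}^{k-k_s}\|\,\|x_{k_s}\| \leq \rho_{m_s}\lambda_{m_s}^{k-k_s}\|x_{k_s}\| \leq \bar\rho\,\|x_{k_s}\|$. Combining with Step 1 yields the uniform bound $\|x_k\| \leq \bar\rho^2 c^{s(k)-1}\|x_{k_0}\|$, where $s(k)$ is the number of switches up to $k$ (with $c^{-1}$ replaced by $1$ when $s(k)=0$). Since $c<1$, this gives $\|x_k\| \leq \bar\rho^2 c^{-1}\|x_{k_0}\|$ for all $k$, which is Lyapunov stability.

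Step 3 proves attractivity, and I expect it to be the only delicate point because the switching signal is arbitrary apart from the dwell-time constraint. I would split into two cases.
\begin{enumerate}
\item If infinitely many switches occur, then $s(k)\to\infty$, and the bound from Step 2 gives $\|x_k\|\to 0$.
\item If only finitely many switches occur, with last switch $k_S$, then the final mode $m_S$ stays active forever. In that case $\|x_k\| \leq \rho_{m_S}\lambda_{m_S}^{k-k_S}\|x_{k_S}\| \to 0$.
\end{enumerate}
In both cases $x_k \to 0$, so the system is asymptotically stable.

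For a stronger exponential statement, one could note that a long interval of length $L_s$ contributes roughly $c^{\lfloor L_s/\tau_{d,m_s}\rfloor}$, by splitting $\overline{A}^{L_s}$ into blocks of length $\tau_{d,m_s}$. However, asymptotic stability is all the lemma requires.
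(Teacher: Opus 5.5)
Your proof is correct and follows the same route as the paper: both propagate the state across each dwell interval using $\|\overline{A}_m^{L_s}\|\le\rho_m\lambda_m^{\tau_{d,m}}$, and both use \eqref{equ_dt} to obtain contraction at the switching instants. Your version is more complete, because it supplies several steps that the paper's one-line ``by recursion'' leaves implicit: a uniform contraction factor $c=\max_m\rho_m\lambda_m^{\tau_{d,m}}<1$ (the paper only asserts a strict decrease), a bound on the inter-switch overshoot giving Lyapunov stability, treatment of the initial interval, and the finitely-many-switches case.
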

\vspace{-0.6cm}
\begin{pf}
	For any consecutive switches $k_s$ and $k_{s+1}$ with mode $m = \sigma(k_s)$, the state evolves as $x_{k_{s+1}} = (A_{m} + B_m K_m)^{k_{s+1}-k_s} x_{k_s}$.
Taking the norm and applying $k_{s+1} - k_s \geq \tau_{d,m}$ yields $\| x_{k_{s+1}} \| \leq \rho_m \lambda_m^{\tau_{d,m}} \| x_{k_s} \|$. Condition \eqref{equ_dt} ensures a strict contraction $\| x_{k_{s+1}} \| < \| x_{k_s} \|$ at each switch. By recursion, $\lim_{k \to \infty} \|x_k\| = 0$, establishing asymptotic stability. \qed
\end{pf}
\vspace{-0.4cm}
\begin{algorithm}
	\caption{Computation of the Domain of Attraction}\label{result_alg_fs}
	\begin{algorithmic}[1]
		\Require ~~ ${A}_m$, $B_m$, ${\mathbb{X}}_m$, $\mathbb{U}_m$, ${\Xi}_{\infty}^{\lambda}$, $\mathcal{T}_{\infty}$, $m$, $N^{\max}_m$, $N^{\min}_m$
		\Ensure~~ ${\mathcal{F}}_{m}$
		\State  ${\mathcal{W}_0^m} = \mathcal{T}_\infty$
		\For{$t$ = $1$ to $N^{\min}_m$} \label{A3_step2}
		\State  ${\mathcal{W}_{t}^m}  = {\mathcal{W}}_{t-1}^m \cup \left(\mathcal{P}^1_m ( {\mathcal{W}_{t-1}^m}) \cap  \Xi_{\infty}^{\lambda}\right)$ \label{A3_step3}
		\EndFor \label{A3_step4}
		\State  ${\mathcal{F}}_{m} = {\mathcal{W}}_{N_m^{\min}}^m \cup \mathcal{P}^{N^{\max}_m-N^{\min}_m}_m (  {\mathcal{W}}_{N_m^{\min}}^m)$ \label{A3_step_5}
	\end{algorithmic} 
\end{algorithm}

\vspace{-4pt}
\textit{Algorithm \ref{result_alg_fs}} outlines the computation of the domain of attraction $\mathcal{F}_{m}$ for \eqref{MPC}.
Problem \eqref{MPC} is initially feasible for mode $m$ if $x_0 \in \mathcal{F}_m$.  
Notably,  the set $\Xi_{\infty}^{\lambda}$ constitutes a subset of each mode-specific feasible set, i.e., $\Xi_{\infty}^{\lambda} \subseteq \bigcap_{m \in \mathcal{M}} \mathcal{F}_{m}.$


\begin{theorem}\label{result_thm_sta}
	Consider the MPC problem (\ref{MPC}) for the switched system (\ref{state}) with horizon satisfying (\ref{equ_N}). Suppose $\Delta N_{m} \leq \tau_{d,m}$ for all $m \in \mathcal{M}$, the dwell-times $\tau_{d,m}$ satisfy \eqref{equ_dt}, and the initial state $x_0 \in \mathcal{F}_{\sigma(0)}$ is computed via \textit{Algorithm \ref{result_alg_fs}}. Then, the closed-loop switched system under \eqref{MPC} is asymptotically stable.
\end{theorem}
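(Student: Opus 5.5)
The plan is to combine feasibility with a Lyapunov argument that is piecewise in the switching intervals. For each mode the optimal cost $V_m(x_k)\triangleq J_m(x_k,\bm{u}_k^*)$ will serve as a Lyapunov-like function. The decrease within a mode comes from the standard shifted-sequence argument. Convergence across switches is then handled by a contraction of the feasible region at each switch, enforced through $\Theta(\lambda,s)=\lambda^s\Xi_\infty^\lambda$.

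\emph{Step 1: feasibility.} Since $x_0\in\mathcal{F}_{\sigma(0)}$, Algorithm \ref{result_alg_fs} gives an admissible sequence for \eqref{MPC} at $k=0$. Because $\Delta N_m\le\tau_{d,m}$, Theorem \ref{thm_fea} then yields persistent feasibility for all $k$. At every switching instant $k_s$ we also get $x_{k_s}\in\Theta(\lambda,s-1)=\lambda^{s-1}\Xi_\infty^\lambda$.

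\emph{Step 2: decrease within a mode.} On $[k_s,k_{s+1})$ I use the candidate obtained by shifting $\bm{u}_k^*$. While the horizon is shrinking ($\mathcal{D}(k)<\Delta N_m$) no terminal step is appended, and we get $V_m(x_{k+1})\le V_m(x_k)-\ell_m(x_k,u_k^*)$. Once the horizon is fixed at $N_m^{\min}$, I append $K_m x^*_{N|k}$. This stays feasible because $\mathcal{T}_\infty$ is invariant under $\overline{A}_m$, and it requires checking that $\mathcal{T}_\infty\subseteq\lambda^s\Xi_\infty^\lambda$ for the $\Theta$-constraint. I expect this inclusion to follow from the construction of $\mathcal{T}_\infty$ in \cite{Chong_2012} (contractivity by $\lambda$) together with Proposition \ref{result_properties}(iv). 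The DARE identity $\|x\|_{P_m}^2-\|\overline{A}_m x\|_{P_m}^2=\ell_m(x,K_m x)$ then gives the same decrease. Combined with the bounds $\|x\|_{Q_m}^2\le V_m(x)\le \alpha_m\|x\|^2$ on the compact feasible set, $V_m$ decays geometrically inside each interval.

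\emph{Step 3: across switches.} This is the main obstacle, because $V_{\sigma(k_{s+1})}$ and $V_{\sigma(k_s)}$ are different functions and a jump can increase the cost. I would avoid comparing costs directly. Instead I would use the set contraction: $x_{k_s}\in\lambda^{s-1}\Xi_\infty^\lambda$, and $\Xi_\infty^\lambda$ is compact, so $\|x_{k_s}\|\le c\,\lambda^{s-1}\to0$. For Lyapunov stability I bound the inter-switch transient by $\|x_k\|\le\sqrt{\alpha_m/\underline{q}}\,\|x_{k_s}\|$ using Step 2, where $\underline{q}$ denotes the smallest eigenvalue of $Q_m$. Condition \eqref{equ_dt} is used via Lemma \ref{result_lemme_dt} and Lemma \ref{result_nonempty} to guarantee $\Xi_\infty^\lambda\neq\emptyset$ with $\lambda\in(0,1)$, which makes the contraction meaningful. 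If switching stops after finitely many switches, the within-mode decrease alone gives $x_k\to0$.

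\emph{Step 4: conclusion.} Together these steps give uniform boundedness of $\|x_k\|$ by a class-$\mathcal{K}$ function of $\|x_0\|$ and convergence $x_k\to0$, hence asymptotic stability of the closed loop. If the $\lambda^s$ argument fails to give the $\epsilon$–$\delta$ part cleanly, I would fall back on a mode-independent bound $\max_m\alpha_m/\min_m\underline{q}_m$ and use \eqref{equ_dt} to absorb the jump factor at each switch.
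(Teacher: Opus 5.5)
Your skeleton matches the paper's proof: feasibility from $x_0\in\mathcal F_{\sigma(0)}$ and Theorem~\ref{thm_fea}, and the within-mode decrease $V_{k+1}-V_k\le-\ell_m(x_k,u_k^*)$ from the shifted candidate and the DARE identity. It also shares the shrinking of switching-instant states through $\Theta(\lambda,s)=\lambda^s\Xi_\infty^\lambda$ and summability of $\ell_m$ when switching stops.

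The step that fails is your proposed justification in Step~2: $\mathcal T_\infty\subseteq\lambda^s\Xi_\infty^\lambda$ cannot hold for all $s$. Because $\Xi_\infty^\lambda$ is compact, $\lambda^s\Xi_\infty^\lambda$ shrinks to $\{0\}$. But $\mathcal T_\infty$ is a fixed set that contains a neighbourhood of the origin, as a terminal set for \eqref{MPC} normally does. Proposition~\ref{result_properties}(iv) gives the inclusion only for $s=0$, and no contractivity property of $\mathcal T_\infty$ can restore it for large $s$. You are right that the appended state $x^*_{N_m^{\min}|k}$ must lie in $\Theta(\lambda,s)$ at time $k+1$. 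However, the paper does not resolve this either. Its proof just asserts that shifting and appending $K_mx$ gives a feasible candidate, inherited from Theorem~\ref{thm_fea}. That argument uses only invariance of $\mathcal T_\infty$, which covers the terminal constraint and the case $s=0$, but not the $\Theta$ constraint once $s\ge1$. To match the paper, take this step from Theorem~\ref{thm_fea} and drop the inclusion. A self-contained argument would need an extra property linking $\mathcal T_\infty$ and $\Theta(\lambda,s)$ that the paper does not establish.

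Across switches your route differs slightly. The paper scales the Algorithm~\ref{result_alg_sfs} transition by $\lambda^s$ (linear dynamics, convex sets containing the origin, quadratic cost) to get $V_{k_{s+1}}\le c_V\lambda^{2s}$. It then uses $\underline q\|x_k\|^2\le V_k$, which gives $x_k\to0$ but not the $\epsilon$--$\delta$ part. You instead bound $\|x_{k_s}\|\le c\lambda^{s-1}$ and invoke $V\le\alpha_m\|x\|^2$ ``on the compact feasible set''. Since $\Theta(\lambda,s)$ in \eqref{MPC} changes with $s$, this constant must be uniform in $s$, and compactness alone does not give that. It does follow from the paper's scaling, applied with the gauge $\mu$ of $x_{k_s}$ with respect to $\Xi_\infty^\lambda$. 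Here $\mu\le\lambda^{s-1}$, and the transition from $x_{k_s}/\mu\in\Xi_\infty^\lambda$, scaled by $\mu$, is feasible for $\Theta(\lambda,s)$ with cost at most $\mu^2\bar J$, where $\bar J$ bounds transition costs over $\Xi_\infty^\lambda$. Finally, $\mu\le\|x_{k_s}\|/r$ if $\Xi_\infty^\lambda$ contains the ball of radius $r$.

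With this bound, your version gives Lyapunov stability, which the paper's proof does not. Let $\gamma=\max_m\sqrt{\alpha_m/\underline q}$. Then $\|x_k\|\le\gamma\|x_{k_s}\|$ on $[k_s,k_{s+1}]$ and $\|x_{k_s}\|\le\min\{\gamma^s\|x_0\|,\,c\lambda^{s-1}\}$, so $\sup_k\|x_k\|\to0$ as $x_0\to0$. Your fallback would not work: \eqref{equ_dt} is a bound on the unconstrained LQR loop $\overline{A}_m$, not on the jump between MPC value functions of different modes.
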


\vspace{-0.4cm}
\begin{pf}
	\textbf{Feasibility:} Lemmas \ref{result_nonempty} and \ref{result_lemme_dt} guarantee that
\(\mathcal T_\infty\) and \(\Xi_\infty^\lambda\) are nonempty for some \(\lambda\in(0,1)\). Since \(x_0\in\mathcal F_{\sigma(0)}\), problem \eqref{MPC} is initially feasible, and its persistent feasibility follows from Theorem 1.	


\textbf{Stability:} Let \(V_k=J_{\sigma(k)}^*(x_k)\) denote the optimal value of problem (3). Consider two consecutive instants \(k\) and \(k+1\) within the same mode \(m\). During the decreasing-horizon phase, removing the first element of the optimal sequence at \(k\) gives a feasible candidate at \(k+1\). Once the horizon reaches \(N_m^{\min}\), a feasible candidate is obtained by shifting the optimal sequence and appending the terminal control \(K_mx\). From the DARE identity,
\(
T_m((A_m+B_mK_m)x)-T_m(x)
=-\ell_m(x,K_mx).\)
Consequently,  $	V_{k+1}-V_k
\leq-\ell_m(x_k,u_k^*).$

Let \(k_s\) denote the switching instants. By constraint (3d), Algorithm 1, and \(\Delta N_m\leq\tau_{d,m}\), \(x_{k_{s+1}}\in
\Theta(\lambda,s)
=\lambda^s\Xi_\infty^\lambda.\)
Because the dynamics are linear, the cost is quadratic, and the feasible transition constructed by Algorithm 1 can be scaled by \(\lambda^s\), there exists a constant \(c_V>0\), independent of \(s\), such that $V_{k_{s+1}}
\leq c_V\lambda^{2s}.$

If infinitely many switches occur, then for every \(k\in[k_{s+1},k_{s+2})\), we have $V_k\leq c_V\lambda^{2s}.$
Moreover, with $\underline q
=\min_{m\in\mathcal M}\lambda_{\min}(Q_m)>0,$
one has $\underline q\|x_k\|^2
\leq V_k
\leq c_V\lambda^{2s}.$
Since \(\lambda\in(0,1)\), this implies \(x_k\to0\).

If only finitely many switches occur, the system eventually remains in one fixed mode. After the final switch, one has $\sum_{k=k_s}^{\infty}
\ell_m(x_k,u_k^*)
\leq V_{k_s}<\infty.$
Since \(Q_m\succ0\), it follows that \(x_k\to0\). Therefore, the closed-loop system is asymptotically stable. \(\square\)

\end{pf}
\vspace{-0.4cm}
\color{black}
\begin{remark}\label{remark2}
	Compared with \cite{zhang_switched_2016}, the proposed dwell-time condition has a simpler form and yields less conservative bounds. Through the co-design of the variable prediction horizon and the switching feasible set, feasibility under the hard state and input constraints is handled by the horizon and set conditions. Once the stabilizing feedback gain \(K_m\) is obtained from the DARE for the chosen LQR weighting matrices, the transient factor \(\rho_m\) and decay rate \(\lambda_m\) are determined from \(|(A_m+B_mK_m)^n|\leq\rho_m\lambda_m^n\). Condition of $\rho_m\lambda_m^{\tau_d,m}<1$ remains  sufficient whose conservatism depends on the selected feedback gain, norm, and exponential estimate, but it is not further restricted  by explicit hard-constraint or cross-mode set calculations.
\end{remark}

{\color{black}	
	\begin{remark}
%
The proposed framework can be extended to a robust tube-based MPC formulation for switched systems subject to bounded additive disturbances. 
Following \cite{MAYNE2005219}, for each mode $m$, a local gain $K_ m$ and a robust positively invariant set $\mathbb Z_m$ can be constructed for the error dynamics, yielding the tightened nominal constraints of $\hat{\mathbb{X}}_m = \mathbb{X}_m \ominus \mathbb{Z}_m$ and $\hat{\mathbb{U}}_m = \mathbb{U}_m \ominus K_m \mathbb{Z}_m$. 
The predecessor-set computations underlying Algorithms 1 and 2 could then be adapted to these tightened constraint sets to seek corresponding nominal terminal and switching feasible sets.
Let $\hat\Theta_m(\lambda,s)$ denote a robust nominal switching target set to be designed. Its construction should jointly address constraint tightening, inter-step reachability or invariance, tube re-centering under admissible mode transitions, and an appropriate practical stability condition. Since the mode-dependent tube does not contract with $\lambda$, nominal-set scaling alone could not establish contraction of the actual-state set or closed-loop stability. The detailed set construction and practical-stability analysis are left for future investigation.
	\end{remark}
}
{\color{black}
	\begin{remark}
The variable-horizon mechanism introduces a trade-off between closed-loop performance and switching feasibility. Shortening the prediction horizon reduces the prediction capability and may therefore degrade  performance. However, this effect can be mitigated because the full prediction horizon is used immediately after each switch and shortens only as the state is driven toward the switching feasible set. The DARE-based terminal weight \(P_m\) further mitigates the truncation effect by accounting for the cost beyond the shortened horizon; it represents the exact infinite-horizon tail cost for the corresponding unconstrained fixed-mode LQR problem.
%
%
\end{remark}}
\color{black}


	The horizon parameters distinctly shape the switching feasible set $\Xi_{\infty}^{\lambda} \subseteq \bigcap_{m \in \mathcal{M}} \mathcal{P}^{N_m^{\max}}_m(\mathcal{T}_{\infty})$. 
	Increasing $N_m^{\max}$ while fixing $N_m^{\min}$ enlarges the span $\Delta N_m$, thereby expanding the feasible set. Conversely, increasing $N_m^{\min}$ with a fixed $N_m^{\max}$ narrows $\Delta N_m$, creating a trade-off where stricter transition conditions may paradoxically shrink the domain of attraction. 
	Furthermore, synchronously shifting the horizon window, i.e., increasing both $N_m^{\max}$ and $N_m^{\min}$ while keeping $\Delta N_m$ constant, initially expands the switching feasible set. However, this growth saturates beyond a critical threshold, as the feasible space is ultimately bounded by the system's inherent maximal stabilizable set. 
More importantly, blindly pursuing a larger domain of attraction by increasing $N_m^{\max}$ inevitably exacerbates the computational burden and demands longer mode-dependent dwell-time (since $\Delta N_m \le \tau_{d,m}$). To break this geometric and computational bottleneck, the following section introduces a \textit{short-horizon} switched MPC scheme, designed to effectively enlarge the feasible region without relying on excessively long prediction horizons.

\section{Short-Horizon Switched MPC}\label{sec:short}
The feasible set is determined by the horizon bounds $N_{m}^{\max}$, $N_m^{\min}$, and the switching feasible set. Fixing the prediction horizon at $N_{m}^{\max}$ as in \cite{zhang_switched_2016} shrinks $\Xi_{\infty}^{\lambda}$. Consequently, the overall feasible set of the proposed variable horizon MPC is smaller than that in \cite{zhang_switched_2016}.  In addition, the \textit{variable horizon} MPC requires $N_m^{\max} \geq \tau_{d,m}$ and an upper bound of $N_m^{\max} \leq \tau_{d,m} + N_m^{\min}$. If the predefined dwell time is close to the maximum horizon limit, the required minimum horizon $N_m^{\min}$ cannot be sufficiently small, which severely restricts the feasible set.
To mitigate this restriction, we propose a \textit{short horizon} switched MPC scheme for two cases: (i) $N_m^{\max} \geq \tau_{d,m}$ and (ii) $N_m^{\max} < \tau_{d,m}$.

 Let $\mathcal{N}_{m}^{*}$ and $N^{\max}_{m}$ denote the desired and actual maximum prediction horizons, respectively. Define $\underline{\mathcal{N}}^*_m$ as the desired minimum prediction horizon within the switching feasible set. The relationship between $\underline{\mathcal{N}}^*_m$ and $N_m^{\min}$ is analogous to that between ${\mathcal{N}}^*_m$ and $N_m^{\max}$.  The state enters the terminal set $\mathcal{T}_\infty$ within ${\mathcal{N}}^*_m$ steps, and it reaches and stays in the switching feasible set in exactly $\underline{\mathcal{N}}^*_m$ steps after a switch.
 For set computations, define a generalized predecessor operator $\mathcal{R}_m(\mathbb{A},\mathcal{S}) \triangleq \{x \in \mathbb{A} \mid \exists u \in \mathbb{U}_m, A_m x + B_m u \in \mathcal{S}\}$. Its recursive application is denoted by $\mathcal{R}^l_m(\mathbb{A},\mathcal{S}) = \mathcal{R}(\mathcal{R}^{l-1}_m(\mathbb{A},\mathcal{S}))$ for $l \ge 1$, initialized with $\mathcal{R}_m^0(\mathbb{A},\mathcal{S}) \triangleq \mathcal{S}$.

\subsection{Short-Horizon Switched MPC for {$\mathit{N^{\max}_m > \tau_{d,m}}$}}\label{sec:sh}

For the case of $N^{\max}_m > \tau_{d,m}$, one can directly select a minimum horizon $N_m^{\min} \geq N_m^{\max} - \tau_{d,m}$.
The short-horizon switched MPC scheme is formulated as follows:
\begin{subequations}\label{SHMPC1}
	\begin{align}
		& \bm{u}_k^* = \arg \min_{\bm{u}_k} J_m (x_k, \bm{u}_k) \\
		\text{s.t.} & \quad x_{i+1|k} = A_m x_{i|k} + B_m u_{i|k}, ~ x_{0|k} = x_k, \label{SHMPC_c1} \\
		& \quad x_{i|k} \in \mathbb{X}_m, ~ u_{i|k} \in \mathbb{U}_m, ~ \forall i \in \mathbb{Z}_{[0,~\mathcal{N}^m_k-1]}, \label{SHMPC_c2} \\
		& \quad x_{i|k} \in \Theta(\lambda, s), \quad \forall i \in \mathbb{Z}_{[\delta_k, ~\mathcal{N}^m_k]}, \label{SHMPC_c3} \\
		& \quad x_{\mathcal{N}^m_k|k} \in \mathcal{R}_m^{\delta_k}(\Xi_{\infty}^{\lambda},\mathcal{T}_{\infty}), \label{SHter}
	\end{align}
\end{subequations}
where $	\delta_k = \max \{\Delta \mathcal{N}_m^* - \mathcal{D}(k), 0\},$ $\Delta \mathcal{N}_m^*= \mathcal{N}_{m}^{*} - N_{m}^{\max}$ 
and
$	\mathcal{N}^m_k 
	= \begin{cases}
		N_{m}^{\max},& \mathcal{D}(k) \leq  \Delta \mathcal{N}_m^*, \\
		\max \{	\mathcal{N}_{m}^{*}-\mathcal{D}(k),N_{m}^{\min}\},&  \mathcal{D}(k) >  	\Delta \mathcal{N}_m^* .
	\end{cases}
$

Ideally, the state converges into the terminal set $\mathcal{T}_\infty$ within $\mathcal{N}^*_m$ steps. However, since the allowable maximum prediction horizon is strictly bounded by $N_m^{\max} < \mathcal{N}_m^*$, the controller cannot sense the terminal set directly at the switching instant. To bridge this temporal gap and ensure the state systematically reaches $\mathcal{T}_\infty$, intermediate predecessor sets are introduced. Once the mode duration becomes sufficiently long, this short horizon scheme smoothly reduces to the standard variable horizon MPC framework. \textcolor{black}{For detailed schematic diagrams and step-by-step constraint evolution tables, see Appendix.}


\begin{theorem}\label{result_thm_SHMPC}
	Consider the \textit{short-horizon} switched MPC problem (\ref{SHMPC1}) for the switched system (\ref{state}). If the horizon parameters satisfy $\mathcal{N}_m^* - N_m^{\min} \leq \tau_{d,m} < N_m^{\max},$ the dwell-time $\tau_{d,m}$ fulfills (\ref{equ_dt}), and $x_0 \in \mathcal{F}_{\sigma(0)}$, where  $\mathcal{F}_m$ is constructed by \textit{Algorithm \ref{result_alg_fs}} using the  prediction horizon $\mathcal{N}_m^*$, then the switched system (\ref{state}) is asymptotically stable.
\end{theorem}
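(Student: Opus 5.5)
The plan is to reduce Theorem \ref{result_thm_SHMPC} to the structure of Theorem \ref{result_thm_sta}: first establish persistent feasibility of \eqref{SHMPC1}, then show that the optimal cost decreases within each mode and contracts geometrically across switches.

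\textbf{Feasibility within a mode.} After a switch at $k_s$ into mode $m$, the horizon is frozen at $N_m^{\max}$ while $\mathcal{D}(k)\le \Delta\mathcal{N}_m^*$. The terminal constraint $\mathcal{R}_m^{\delta_k}(\Xi_\infty^\lambda,\mathcal{T}_\infty)$ shrinks its index $\delta_k$ by one at each step. I would build the shifted candidate explicitly. Let $x^*_{N_m^{\max}|k}\in\mathcal{R}_m^{\delta_k}(\Xi_\infty^\lambda,\mathcal{T}_\infty)$. By the definition of the generalized predecessor there is $u\in\mathbb{U}_m$ with $A_mx^*_{N_m^{\max}|k}+B_mu\in\mathcal{R}_m^{\delta_k-1}(\Xi_\infty^\lambda,\mathcal{T}_\infty)$, and that successor also lies in $\Xi_\infty^\lambda$. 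Appending this $u$ to the tail of the optimal sequence gives a feasible candidate at $k+1$ for horizon $N_m^{\max}$ and index $\delta_{k+1}=\delta_k-1$. The window constraint \eqref{SHMPC_c3} starting at $\delta_{k+1}$ is consistent with this, because every element of $\mathcal{R}_m^{\delta}(\Xi_\infty^\lambda,\mathcal{T}_\infty)$ lies in $\Xi_\infty^\lambda$. I also need $\Theta(\lambda,s)$ to contain these sets, with the correct scaling $\lambda^s$.

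Once $\delta_k=0$, the terminal set is $\mathcal{T}_\infty$. From then on the horizon decreases as $\mathcal{N}_m^*-\mathcal{D}(k)$ down to $N_m^{\min}$, and feasibility follows exactly as in the proof of Theorem \ref{thm_fea}: drop the first element while the horizon is decreasing, then append $K_mx$ once $\mathcal{T}_\infty$ is reached.

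The condition $\mathcal{N}_m^*-N_m^{\min}\le\tau_{d,m}$ guarantees the following. By the next switching instant $k_{s+1}\ge k_s+\tau_{d,m}$, the horizon has already reached $N_m^{\min}$, so the whole predicted trajectory, including $x_{k_{s+1}}$, lies in $\Theta(\lambda,s)=\lambda^s\Xi_\infty^\lambda$. Initial feasibility comes from $x_0\in\mathcal{F}_{\sigma(0)}$ built by Algorithm \ref{result_alg_fs} with horizon $\mathcal{N}_m^*$. Every $x\in\mathcal{F}_m$ admits an $\mathcal{N}_m^*$-step admissible sequence, and its truncation to $N_m^{\max}$ steps ends in $\mathcal{R}_m^{\Delta\mathcal{N}_m^*}(\Xi_\infty^\lambda,\mathcal{T}_\infty)$. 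At each subsequent switch, $x_{k_{s+1}}\in\lambda^s\Xi_\infty^\lambda$, and by linearity and the construction in Algorithm \ref{result_alg_sfs} (with $\mathcal{N}_m^*$ in place of $N_m^{\max}$) a scaled admissible sequence exists. This gives feasibility at every switch by induction.

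\textbf{Stability.} I would take $V_k=J^*_{\sigma(k)}(x_k)$. During the fixed-horizon phase the candidate above shifts the sequence and appends a stage. The appended terminal step is not the LQR step, so I cannot guarantee $V_{k+1}\le V_k$ there. I expect this to be the main obstacle. The fallback is that this phase lasts at most $\Delta\mathcal{N}_m^*$ steps, which is finite and bounded, and $V$ stays bounded there by compactness of the feasible sets. In the decreasing-horizon phase and the $N_m^{\min}$ phase, the same DARE argument as in Theorem \ref{result_thm_sta} gives $V_{k+1}-V_k\le-\ell_m(x_k,u_k^*)$.

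Across switches, $x_{k_{s+1}}\in\lambda^s\Xi_\infty^\lambda$. Scaling the feasible transition by $\lambda^s$ and using the quadratic cost then gives $V_k\le c_V\lambda^{2s}$ for all $k\in[k_{s+1},k_{s+2})$. I would handle the transient phase with a uniform constant: since the feasible set is compact and the problem is homogeneous under scaling, the optimal cost is bounded by a constant times $\lambda^{2s}$. Combined with $\underline q\|x_k\|^2\le V_k$, this gives $x_k\to0$ when there are infinitely many switches. With finitely many switches, the summability argument used in Theorem \ref{result_thm_sta} applies after the final transient phase. Condition \eqref{equ_dt}, through Lemmas \ref{result_nonempty} and \ref{result_lemme_dt}, ensures $\mathcal{T}_\infty$ and $\Xi_\infty^\lambda$ are nonempty, which completes the argument.
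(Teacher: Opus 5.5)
Your frozen-phase feasibility step is the paper's: shift the sequence, append an input supplied by the predecessor definition, then defer to Theorem~\ref{thm_fea}. The paper, however, uses it only on the initial interval $[k_0,k_0+\Delta\mathcal{N}_m^*)$. There $s=0$, so $\Theta(\lambda,0)=\Xi_\infty^\lambda\supseteq\mathcal{R}_m^{\delta}(\Xi_\infty^\lambda,\mathcal{T}_\infty)$ and the containment you worry about is automatic. For stability the paper does not analyse the cost of \eqref{SHMPC1} at all. It declares \eqref{SHMPC1} equivalent to \eqref{MPC} with maximum horizon $\mathcal{N}_m^*$ and minimum $N_m^{\min}$, and invokes Theorem~\ref{result_thm_sta}, with $\mathcal{N}_m^*-N_m^{\min}\le\tau_{d,m}$ playing the role of $\Delta N_m\le\tau_{d,m}$.

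Your observation that $V$ need not decrease in the frozen phase is correct, and it is exactly what that equivalence passes over. The equivalence can hold only for feasible regions, because the cost stops at $N_m^{\max}$ with $T_m$ evaluated at a point of $\mathcal{R}_m^{\delta_k}(\Xi_\infty^\lambda,\mathcal{T}_\infty)$ rather than of $\mathcal{T}_\infty$. But you let the frozen phase recur after every switch, as the reset of $\mathcal{D}(k)$ dictates. Your repair must therefore work uniformly in $s$, and as written it does not:
\begin{itemize}
\item Compactness only gives $V_k\le\bar V$, which does not force $x_k\to0$ on frozen stretches that occur infinitely often.
\item The problem is not homogeneous. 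Only $\Theta(\lambda,s)$ scales with $\lambda^s$, while $\mathbb{X}_m$, $\mathbb{U}_m$ and $\mathcal{R}_m^{\delta}(\Xi_\infty^\lambda,\mathcal{T}_\infty)$ are fixed sets.
\item Convexity gives only $\lambda^sS\subseteq S$. That makes a scaled sequence feasible at the switching instant itself, but it does not bound $V_k\le c\lambda^{2s}$ at the later states of the frozen phase.
\end{itemize}
For that bound you need a one-step estimate $V_{k+1}\le CV_k$, with $C$ independent of $s$, over the at most $\Delta\mathcal{N}_m^*$ frozen steps. This requires an appended input whose stage-plus-terminal cost is $O(\|x^*_{N_m^{\max}|k}\|^2)$; a gauge argument would do it, but it needs $0\in\operatorname{int}\mathcal{T}_\infty$. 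The resulting candidate must also be feasible.

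That feasibility is precisely what you leave open. Because \eqref{SHMPC_c3} includes the terminal index, the appended state must lie in $\lambda^s\Xi_\infty^\lambda$. The predecessor operator only places it in $\mathcal{R}_m^{\delta_k-1}(\Xi_\infty^\lambda,\mathcal{T}_\infty)\subseteq\Xi_\infty^\lambda$, and for $s\ge1$ neither the definition of $\mathcal{R}_m$ nor Algorithm~\ref{result_alg_sfs} yields the stronger inclusion.

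Your initial and switching-instant feasibility claims also leave the window \eqref{SHMPC_c3} unchecked; you verify only \eqref{SHter}. With $\mathcal{D}(k)=0$ the window already starts at step $\mathcal{N}_m^*-N_m^{\max}$. A sequence certified by Algorithm~\ref{result_alg_fs} (resp.\ Algorithm~\ref{result_alg_sfs}) with horizon $\mathcal{N}_m^*$, however, lies in $\Xi_\infty^\lambda$ (resp.\ $\lambda\Xi_\infty^\lambda$) only from step $\mathcal{N}_m^*-N_m^{\min}$ on. Steps $\mathcal{N}_m^*-N_m^{\max},\dots,\mathcal{N}_m^*-N_m^{\min}-1$ are therefore uncovered whenever $N_m^{\max}>N_m^{\min}$. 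The paper's appendix does not address this either; it lists only the $\mathbb{X}_m$ constraints on $\mathbb{Z}_{[1,\Delta\mathcal{N}_m^*-1]}$ and \eqref{SHter}.

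You could adopt the reading implicit in the paper's proof, with the frozen phase only on the initial interval. Then the containment issue disappears and your bounded-transient fallback suffices for convergence. You would still have to justify why later modes are governed by \eqref{MPC} rather than by \eqref{SHMPC1} with $\mathcal{D}(k)$ reset, and the window check at $k=0$ would still be missing.
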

\textcolor{black}{The proof is omitted for brevity and is provided in Appendix.}


\subsection{Short-Horizon Switched MPC for $\mathit{{N_m^{\max} \leq  \tau_{d,m}}}$}
Consider the case where $N_m^{\max} \leq \tau_{d,m}$. Under this condition, setting $\underline{\mathcal{N}}^*_m = N_m^{\min}$ is no longer feasible; thus, a modified MPC formulation is required. For any state constraint $x_{i|k} \in \mathcal{S}$ over $i \in \mathbb{Z}_{[a,b]}$, the constraint is void if $a > b$.
Define $
{\mathcal{R}}_m^{i}(\mathbb{X}_m,\Theta(\lambda,s)) = \mathcal{R}_m^{i}(\mathbb{X}_m, \Theta(\lambda,s)).
$

The short-horizon MPC problem is formulated as follows:
\begin{subequations}\label{SHMPC2}
	\begin{align}
		& \bm{u}_k^* = \arg \min_{\bm{u}_k} J_m(x_k, \bm{u}_k) \\
		\text{s.t. ~} & x_{i+1|k} = A_m x_{i|k} + B_m u_{i|k}, ~ x_{0|k} = x_k, \\
		& x_{i|k} \in \mathbb{X}_m, \quad u_{i|k} \in \mathbb{U}_m, ~ \forall i \in \mathbb{Z}_{[0, N_m^{\max}-1]}, \\
		& x_{i|k} \in \Theta(\lambda, l), ~ \forall i \in \mathbb{Z}_{[\varepsilon_k, \mathcal{N}^m_k-1]}, \label{equ_SHMPC2} \\
		& x_{N_m^{\max}|k} \in {\mathcal{R}}^{\rho_k}_m(\mathbb{X}_m, \mathcal{R}_m^{\delta_k}(\Xi_{\infty}^{\lambda},\mathcal{T}_{\infty}) ),
	\end{align}
\end{subequations}
where $\varepsilon_k= \max\{\mathcal{N}_{m}^{*}  - \underline{\mathcal{N}}_m^*- \mathcal{D}(k), 0\}$, $	\rho_k = \max\{ \mathcal{N}_m^*  - \underline{\mathcal{N}}_m^*+N_m^{\max}- \mathcal{D}(k),0\}, $ $	\delta_k	=~\begin{cases}
	\varepsilon_k& \rho_k=0, \\
	\mathcal{N}_{m}^{*}  - \underline{\mathcal{N}}_m^* &  \rho_k>0 ,
\end{cases}$
and the horizon length $\mathcal{N}^m_k$ is determined by
\begin{equation*}
	\begin{aligned}
		&\mathcal{N}^m_k 
		= \begin{cases}
			N_{m}^{\max},& \mathcal{D}(k) \leq  \Delta	\mathcal{N}_{m}^{*},\\
			\max \{	\mathcal{N}_{m}^{*}-\mathcal{D}(k),N_{m}^{\min}\},&  \mathcal{D}(k) >  \Delta	\mathcal{N}_{m}^{*}  .
		\end{cases}
	\end{aligned}
\end{equation*}
 Note that, if $\delta(k, \mathcal{D}(k)) > \mathcal{N}^m_k$, the state constraint \eqref{equ_SHMPC2} becomes inactive. For a detailed step-by-step illustration of the constraint evolution and the specific design of the terminal sets, one is referred to Table \ref{table4} in Appendix.
 
 \begin{theorem}\label{result_thm_SHMPC2}
 	Consider the short-horizon switched MPC problem \eqref{SHMPC2}. Suppose that the dwell time $\tau_{d,m}$ satisfies \eqref{equ_dt}, the horizon parameters satisfy $\mathcal{N}_m^* - \underline{\mathcal{N}}_m^* \leq \tau_{d,m}$, and the initial state $x_0 \in \mathcal{F}_{\sigma(0)}$. If the domain of attraction $\mathcal{F}_m$ is computed by Algorithm \ref{result_alg_fs} using the maximum horizon $\mathcal{N}_m^*$, then the closed-loop system is asymptotically stable. 
 \end{theorem}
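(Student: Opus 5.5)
The proof follows the two-part template of Theorem~\ref{result_thm_sta}: first establish persistent feasibility of \eqref{SHMPC2} by an explicit shift-based candidate construction, then show that the optimal cost is a Lyapunov-like function that decreases within each mode and contracts geometrically across switches.

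\textbf{Feasibility.} Since $x_0\in\mathcal F_{\sigma(0)}$ and $\mathcal F_m$ is built by Algorithm~\ref{result_alg_fs} with maximum horizon $\mathcal N_m^*$, there is an admissible open-loop sequence of length $\mathcal N_m^*$. Along it, the state stays in $\mathbb X_m$, is in $\Xi_\infty^\lambda$ for the last $\underline{\mathcal N}_m^*$ steps, and reaches $\mathcal T_\infty$ at step $\mathcal N_m^*$. I would show that the first $N_m^{\max}$ elements of this sequence satisfy \eqref{SHMPC2} at $\mathcal D(k)=0$. The state at step $N_m^{\max}$ lies in $\mathcal R_m^{\rho_0}(\mathbb X_m,\mathcal R_m^{\delta_0}(\Xi_\infty^\lambda,\mathcal T_\infty))$, because the remaining $\mathcal N_m^*-N_m^{\max}$ steps split into $\rho_0$ steps that stay in $\mathbb X_m$ followed by $\delta_0$ steps that stay in $\Xi_\infty^\lambda$. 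This requires checking the index bookkeeping $\rho_0+\delta_0=\mathcal N_m^*-N_m^{\max}$ whenever $\rho_0>0$, and otherwise that the $\varepsilon$-window covers the same steps.

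For subsequent instants I would argue by shifting. At $k+1$ I take the tail of the optimal sequence at $k$ and append one input that keeps the predicted terminal state in the next target set. This is possible because $\rho_{k+1}=\rho_k-1$ (or $\delta_{k+1}=\delta_k-1$) while the terminal set is a one-step predecessor chain, so the old terminal state lies in $\mathcal R_m(\cdot,\text{new target})$. The $\Theta$-windows shift consistently because $\varepsilon_{k+1}=\varepsilon_k-1$. Once $\mathcal D(k)>\Delta\mathcal N_m^*$ and $\rho_k=\delta_k=0$, the problem reduces to \eqref{fhMPC}, where appending $K_m x$ in the invariant set $\mathcal T_\infty$ gives recursive feasibility exactly as in Theorem~\ref{thm_fea}. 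At a switching instant $k_{s+1}$, the condition $\mathcal N_m^*-\underline{\mathcal N}_m^*\le\tau_{d,m}$ guarantees that $\varepsilon_k$ has reached $0$ by then, so the true state satisfies $x_{k_{s+1}}\in\Theta(\lambda,s)=\lambda^s\Xi_\infty^\lambda$. By Algorithm~\ref{result_alg_sfs}, scaled by linearity, a feasible sequence of length $\mathcal N_{m'}^*$ exists for the new mode, and the initial-step argument applies again. Nonemptiness of $\Xi_\infty^\lambda$ and $\mathcal T_\infty$ follows from Lemmas~\ref{result_nonempty} and~\ref{result_lemme_dt}.

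\textbf{Stability.} Take $V_k=J^*_{\sigma(k)}(x_k)$. During the phase where the horizon stays at $N_m^{\max}$ and the terminal set is only a predecessor set, the standard descent argument fails, because the terminal cost $T_m$ is not a control Lyapunov function outside $\mathcal T_\infty$. This is the main obstacle. I would avoid needing a decrease there: that phase lasts at most $\mathcal N_m^*\le\tau_{d,m}+\underline{\mathcal N}_m^*$ steps, and during it the state is confined to the compact set $\mathbb X_m$ and to scaled feasible sets. Over that finite window I would bound $\|x_k\|$ by a constant times $\lambda^{s}$. This uses linearity: the feasible sequences obtained from Algorithm~\ref{result_alg_sfs} scale by $\lambda^s$, so the optimal cost is bounded by $c_V\lambda^{2s}$, and the quadratic cost bounds the state through $Q_m\succ0$. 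After that window, the DARE identity gives $V_{k+1}-V_k\le-\ell_m(x_k,u_k^*)$ as in Theorem~\ref{result_thm_sta}.

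The conclusion then follows the same two cases. With infinitely many switches, $\underline q\|x_k\|^2\le c_V\lambda^{2s}\to0$. With finitely many switches, the cost decreases after the last switch, so $\sum\ell_m<\infty$ and $x_k\to0$. Together these give asymptotic stability.
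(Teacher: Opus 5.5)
There is a genuine gap, and it sits where your proof departs from the paper's structure.

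\textbf{Stability: the bound inside the window is not justified.} The paper identifies the feasible region of \eqref{SHMPC2} with that of \eqref{MPC} run with maximum horizon $\mathcal{N}_m^*$, and then appeals to Theorem~\ref{result_thm_sta}. You instead treat directly the phase in which the horizon is pinned at $N_m^{\max}$, and you correctly note that $T_m$ gives no descent there. However, the step meant to replace descent does not follow from what you cite.

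The $\lambda^s$-scaling of the sequences from Algorithm~\ref{result_alg_sfs} bounds the optimal cost only at the switching instant $k_{s+1}$, where $x_{k_{s+1}}\in\lambda^s\Xi_\infty^\lambda$ is known. At later instants of the window the state is produced by re-optimized inputs. While $\mathcal{D}(k)<\mathcal{N}_m^*-\underline{\mathcal{N}}_m^*$, it is not even required to lie in $\Theta(\lambda,s)$. The shift-and-append candidate only gives $V_{k+1}\le V_k-\ell_m(x_k,u_k^*)+\Delta_k$. Here $\Delta_k$ (appended stage cost plus change of $T_m$) is bounded, via compactness of $\mathbb{X}_m,\mathbb{U}_m$, by an $s$-independent constant, not by anything proportional to $\lambda^{2s}$. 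So $\underline q\|x_k\|^2\le c_V\lambda^{2s}$ is unproven on exactly these windows. Because you restart the window after every switch, you need it infinitely often.

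What is missing is a uniform multiplicative estimate, e.g.\ $V_{k+1}\le(1+c)V_k$ with $c$ independent of $s$. One way to get it:
\begin{itemize}
\item Pick the appended input with $\|u\|\le L\|x^*_{N_m^{\max}|k}\|$. A gauge (positively homogeneous) selection works, because the admissible pairs $(z,u)$ form a compact convex polytope containing the origin, and the target sets contain $\mathcal{T}_\infty$, hence a ball around the origin.
\item Use $\|x^*_{N_m^{\max}|k}\|_{P_m}^2\le V_k$.
\end{itemize}
Over at most $\mathcal{N}_m^*$ steps this gives $V_k\le(1+c)^{\mathcal{N}_m^*}c_V\lambda^{2s}$.

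\textbf{Feasibility: restarting \eqref{SHMPC2} at every switch breaks your argument.} Algorithm~\ref{result_alg_sfs} is run with $(N_m^{\max},N_m^{\min})$. It certifies $N_{m'}^{\max}$-step sequences ending in $\mathcal{T}_\infty$ whose $N_{m'}^{\min}$ states preceding the terminal one lie in $\lambda\Xi_\infty^\lambda$. That is feasibility of \eqref{MPC}, not an $\mathcal{N}_{m'}^*$-step sequence with the constraint pattern of \eqref{SHMPC2}.

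Moreover, the terminal chain $\mathcal{R}_m^{\delta_k}(\Xi_\infty^\lambda,\mathcal{T}_\infty)$ is built from the unscaled set. So for $s\ge1$ your shift argument fails: the old terminal state lies in $\Xi_\infty^\lambda$, but once it is shifted into the $\Theta$-window it must lie in $\Theta(\lambda,s)=\lambda^s\Xi_\infty^\lambda$, which nothing guarantees.

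The paper avoids both problems by using the short-horizon formulation only over the initial interval, where $s=0$ and $\Theta(\lambda,0)=\Xi_\infty^\lambda$ matches the chain, and invoking Theorem~\ref{thm_fea} afterwards. If you adopt that structure, the non-descent window occurs only once, before the first switch. There, boundedness of the state in $\mathbb{X}_{\sigma(0)}$ suffices for convergence, and the $\lambda^{2s}$ bound after switches is exactly the one in the proof of Theorem~\ref{result_thm_sta}.
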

 
 The proof parallels that of Theorem \ref{result_thm_SHMPC} and is omitted for brevity. Evidently, problem \eqref{SHMPC2} relaxes the coupling between the dwell time and the prediction horizon bounds.

\begin{remark}\label{remark_short}
	The proposed MPC formulations \eqref{MPC}  typically yield a smaller domain of attraction compared to (\cite{zhang_switched_2016}). To overcome this, the proposed short-horizon scheme strategically enlarges the initial dwell time $\hat{\tau}_{d,m}$ relative to the values of $\tau_{d,m}$ used in Algorithm \ref{result_alg_sfs}. For fixed $N_m^{\min}$ and $N_m^{\max}$, the desired maximum horizon is defined as $\mathcal{N}^*_m = N_m^{\min} + \hat{\tau}_{d,m}$. Depending on the relationship between $N_m^{\max}$ and $\hat{\tau}_{d,m}$, the control sequence is computed via \eqref{SHMPC1} or \eqref{SHMPC2}. The resulting domain of attraction coincides exactly with $\mathcal{F}_m$ generated by Algorithm \ref{result_alg_fs} under $\mathcal{N}^*_m$. This confirms that the proposed method systematically expands the feasible region, validated by the subsequent simulations. 
\end{remark}

\section{Simulations }\label{sec:sim}
Consider a switched system consisting of two subsystems with parameters
{\scriptsize $A_{1} = \left[ \begin{array}{cc}	0.1397 & -0.9598 \\	1.4822 & 0.3995	\end{array}\right], $ $ 	B_{1} = \left[ \begin{array}{c}	0.5 \\ 	1\end{array}\right], $ $A_{2} = \left[ \begin{array}{cc}
		0.4557 & -1.0797 \\
		1.1777 & -0.4360
	\end{array}\right], $} {\scriptsize $ 	B_{2} = \left[ \begin{array}{c}	0 \\ 	1	\end{array}\right].$}
The state and control constraints are given by $\mathbb{X}_1 = \mathbb{X}_2 = \{ x \in \mathbb{R}^2 : \|x\|_{\infty} \leq 5\}$, 
$\mathbb{U}_1 = \{ u \in \mathbb{R} : \|u\| \leq 1.5\}$ and $ \mathbb{U}_2 = \{ u \in \mathbb{R} : \|u\| \leq 1\}.$ 
The state weighting matrices are $Q_1 = Q_2 = I$, and the input weighting matrices are $R_1 = R_2 = 1$. The terminal penalty matrices $P_m$ and feedback gains $K_m$ ($m \in \{1,2\}$) are derived using the unconstrained LQR method. According to Lemma \ref{result_lemme_dt}, the dwell-time constraints  are 
$\tau_{d,1}= \tau_{d,2}= 2$. Set  the horizons to $N^{\min}_1 = N^{\min}_2 = 5$, and $N^{\max}_1 =N^{\max}_2 = 7$.


\begin{figure}
	\centering
	\subfigure[Maximum prediction horizon]{\includegraphics[width=0.65\linewidth]{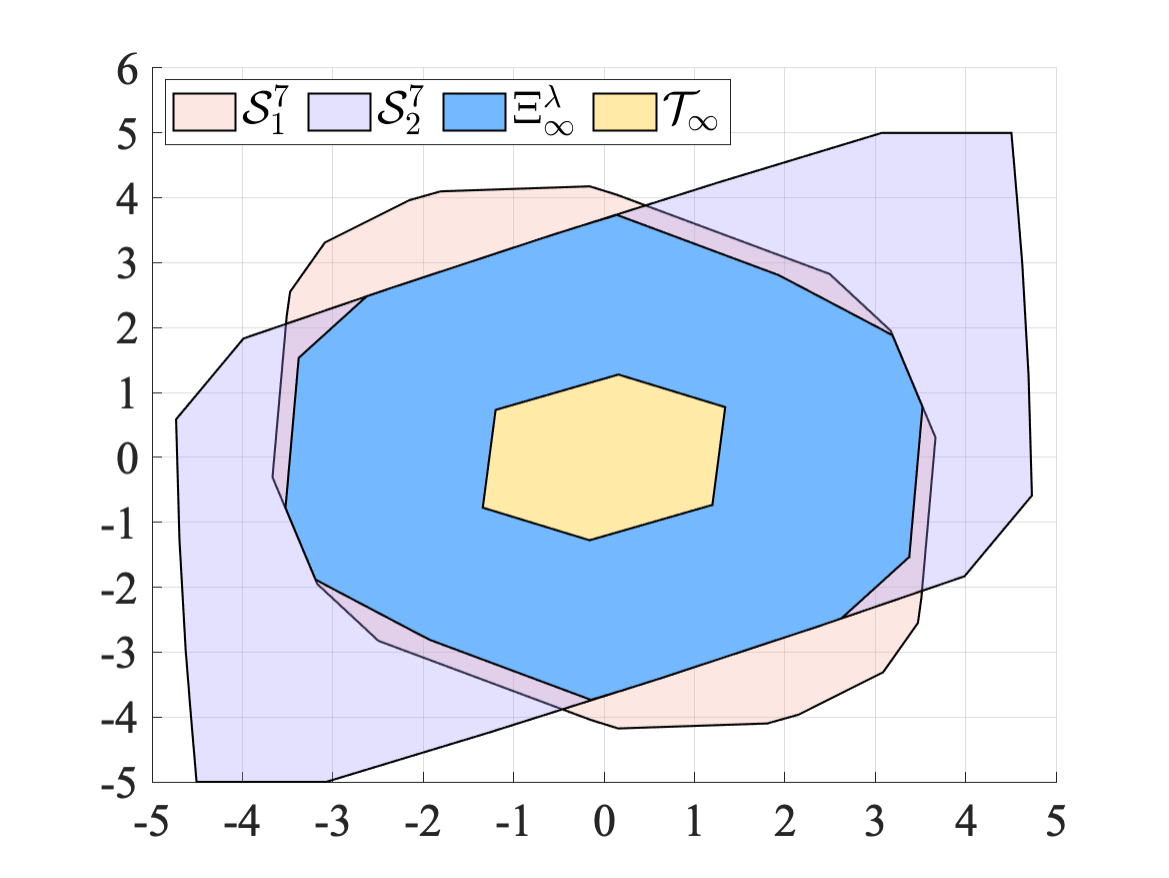}}\vspace{-0.2cm}
	\subfigure[Minimum prediction horizon]{\includegraphics[width=0.65\linewidth]{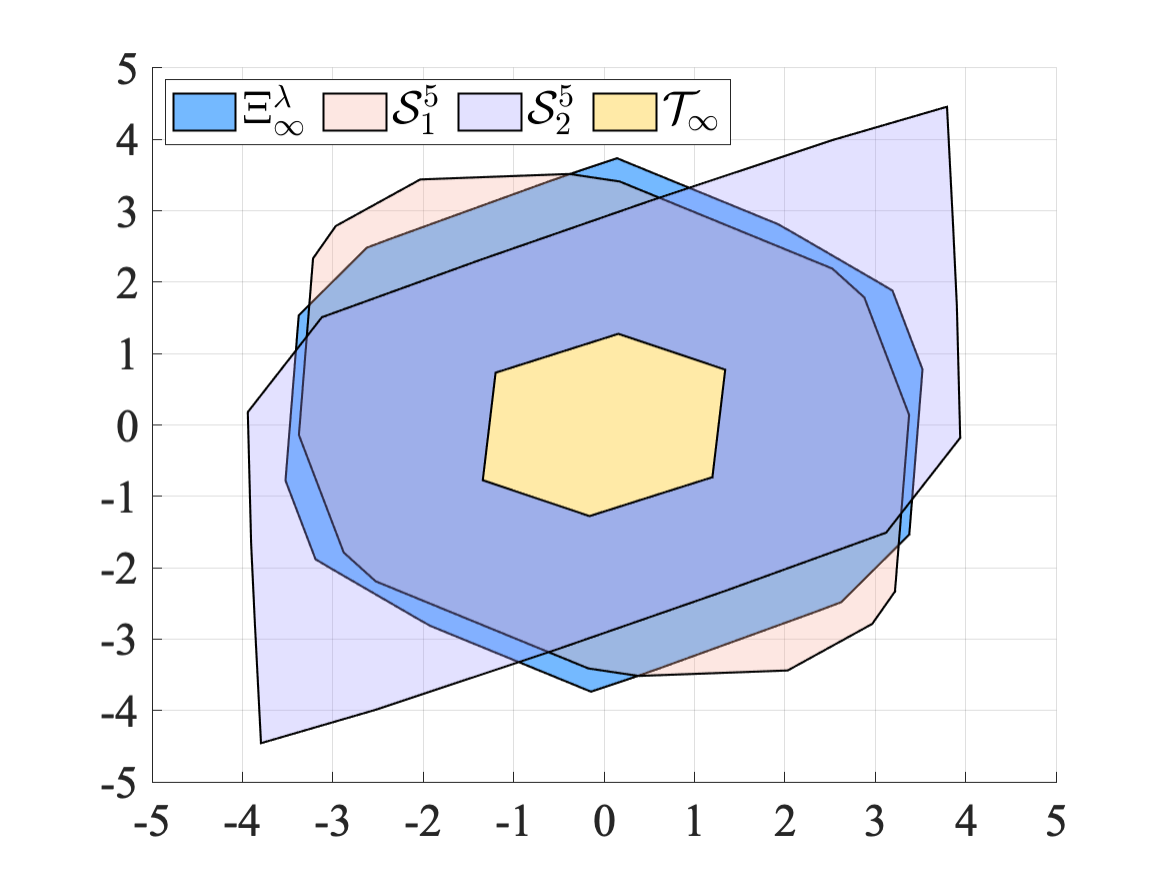}}\vspace{-0.3cm}
	\caption{The feasible regions with $N_1^{\max} = N_2^{\max} = 7$ and $N_1^{\min} = N_2^{\min} = 5$ for the standard MPC and the corresponding switching feasible set} \vspace{-0.4cm}
	\label{fig:set}
\end{figure}

\begin{figure}
	\centering
	\subfigure[Maximum prediction horizon]{\includegraphics[width=0.65\linewidth]{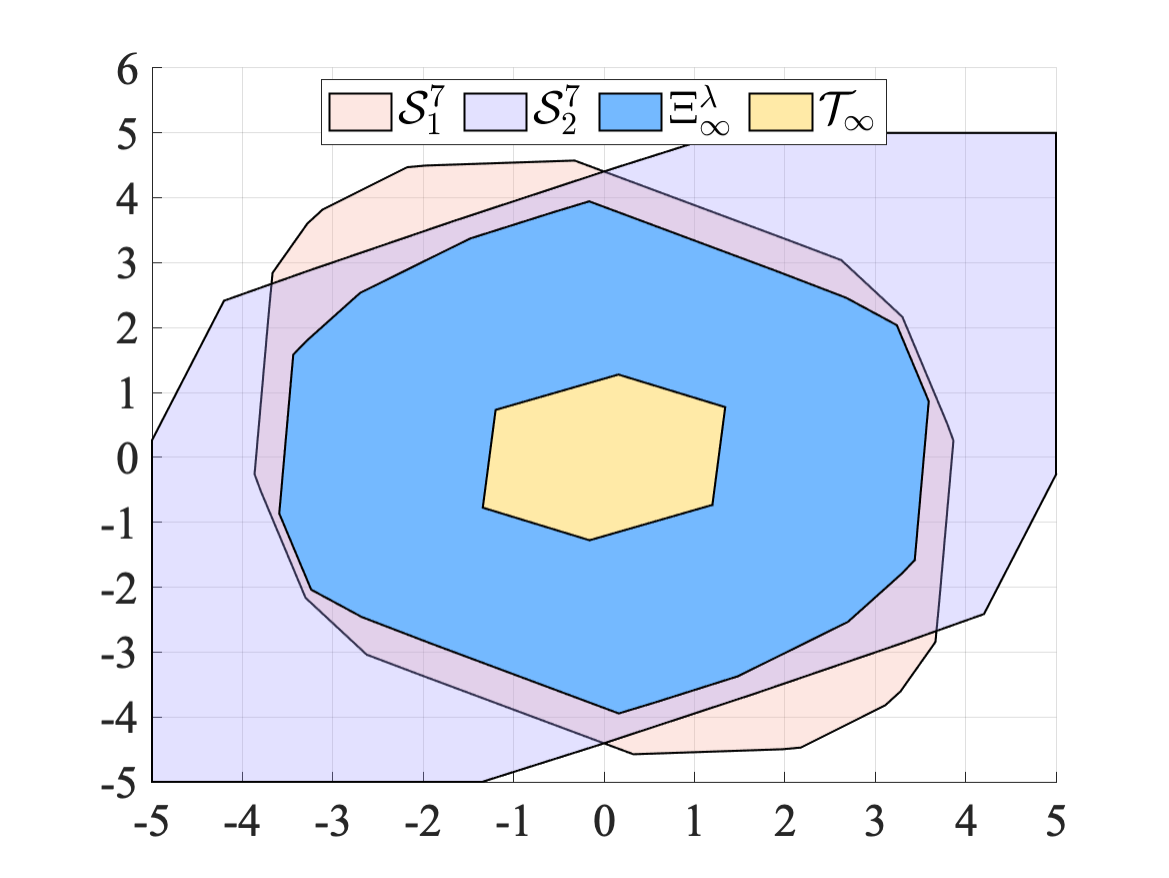}}\vspace{-0.2cm}
	\subfigure[Minimum prediction horizon]{\includegraphics[width=0.65\linewidth]{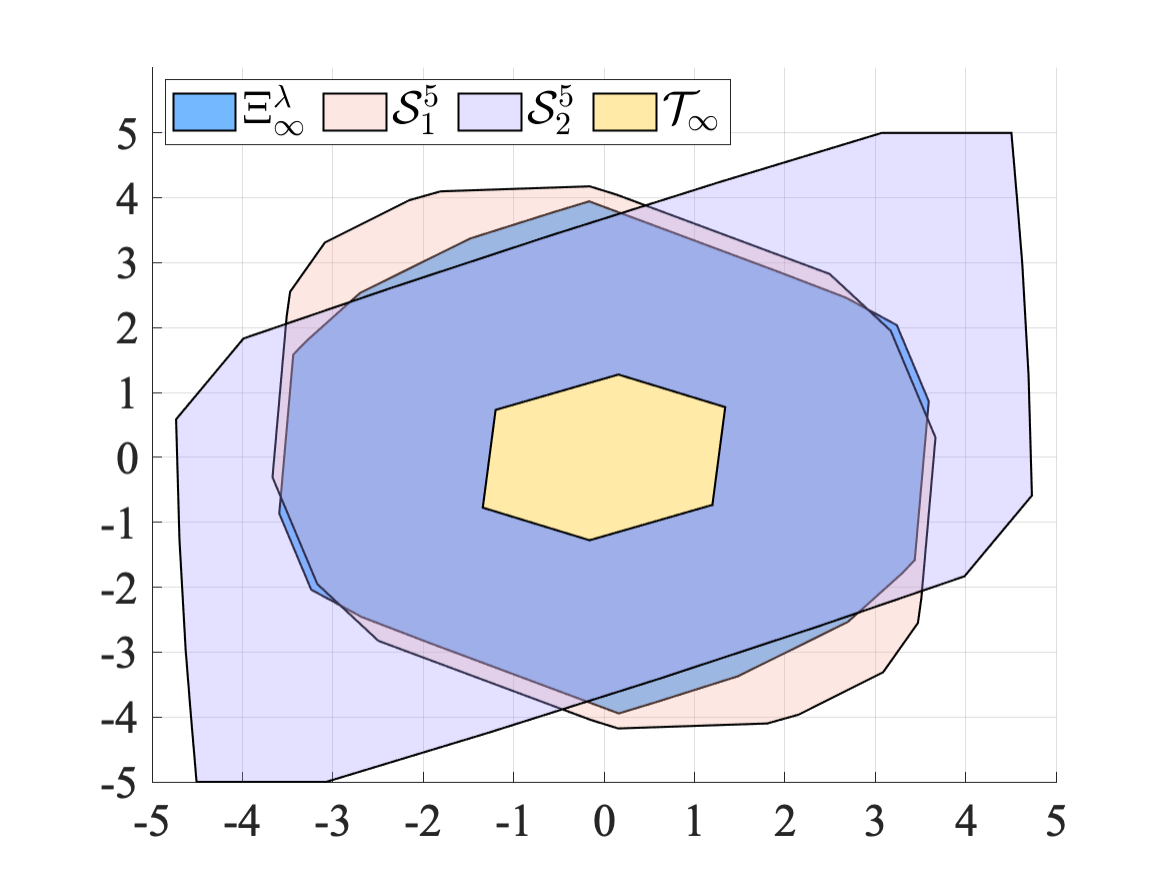}}\vspace{-0.3cm}
	\caption{The feasible regions for the standard MPC with $N_1^{\max} = N_2^{\max} = 9$ and $N_1^{\min} = N_2^{\min} = 7$ and the corresponding switching feasible set}\vspace{-0.3cm}
	\label{fig:set2}
\end{figure}

\textit{$\bullet$ The relationship between the set $\Xi_{\infty}^\lambda$ and the set $\mathcal{S}_m^n$:}
The set $\mathcal{S}_{m}^{n} = \mathcal{P}_m^n(\mathcal{T}_{\infty})$ defines the $n$-step domain of attraction for the $m$-th subsystem under MPC constraints \eqref{MPC_c1}, \eqref{MPC_c2}, and \eqref{ter}. As discussed in Remark \ref{remark_feasible_sw}, the switching feasible set $\Xi_{\infty}^{\lambda}$ is a subset of the maximal-horizon intersection $\bigcap_{m\in \mathcal{M}} \mathcal{S}_m^{N_m^{\max}}$. Furthermore, there is no explicit inclusion relationship between $\Xi_{\infty}^{\lambda}$ and the minimal-horizon intersection $\bigcap_{m\in \mathcal{M}} \mathcal{S}_m^{N_m^{\min}}$.

These geometric relationships are visually demonstrated by Figs. \ref{fig:set} and \ref{fig:set2}. Fig. \ref{fig:set} plots $\Xi_{\infty}^{\lambda}$ together with the sets $\mathcal{S}_{m}^{n}$ for all $m \in \mathcal{M}$ and $n \in \{N_m^{\min}, N_m^{\max}\}$ (set to $5$ and $7$, respectively). The set $\Xi_{\infty}^{\lambda}$ is strictly contained in the maximal-horizon intersection $\mathcal{S}_{1}^{7} \cap \mathcal{S}_{2}^{7}$, whereas the minimal-horizon inclusion $\mathcal{S}_{1}^{5} \cap \mathcal{S}_{2}^{5} \subseteq \Xi_{\infty}^{\lambda}$ is not guaranteed. Fig. \ref{fig:set2} evaluates the sets with extended prediction horizons of $N_m^{\max} = 9$ and $N_m^{\min} = 7$. As shown, neither $\mathcal{S}_{1}^{7} \cap \mathcal{S}_{2}^{7} \subseteq \Xi_{\infty}^{\lambda}$ nor $\Xi_{\infty}^{\lambda} \subseteq \mathcal{S}_{1}^{7} \cap \mathcal{S}_{2}^{7}$ holds.

\begin{figure}
	\centering
	\includegraphics[width=0.65\linewidth]{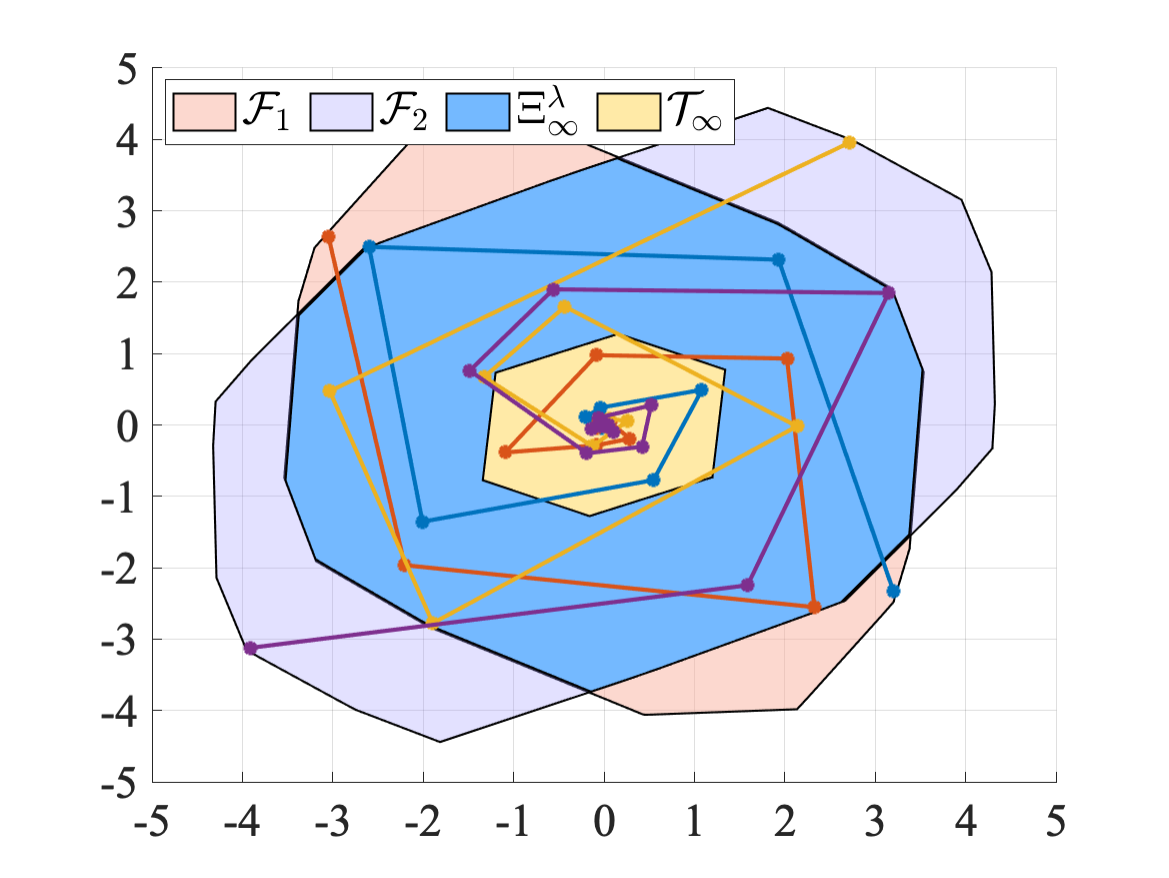}\vspace{-0.3cm}
	\caption{The domains  of attraction and the state trajectories from four different initial points}
	\label{fig:stateset}
\end{figure}
\begin{figure}
	\centering
	\begin{minipage}[b]{0.5\textwidth} 
		\centering
		\subfigure[State trajectories	\label{fig:state}]{\includegraphics[width=0.85\linewidth]{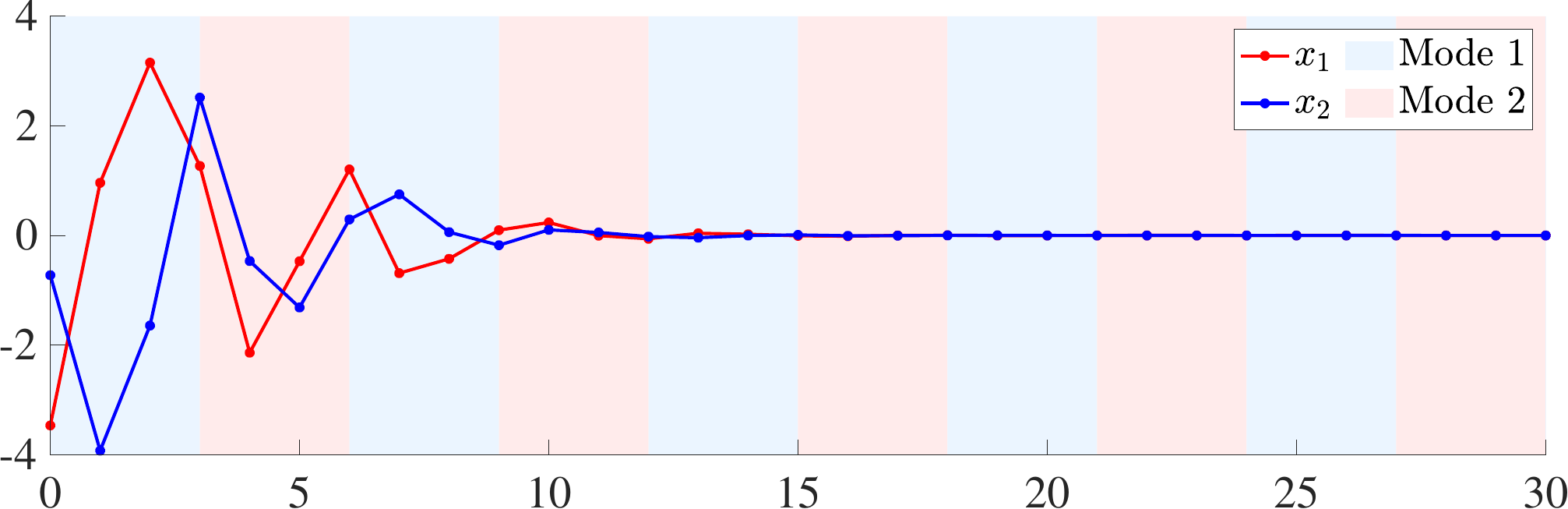}}
		\subfigure[Control input	\label{fig:switching}]{\includegraphics[width=0.85\linewidth]{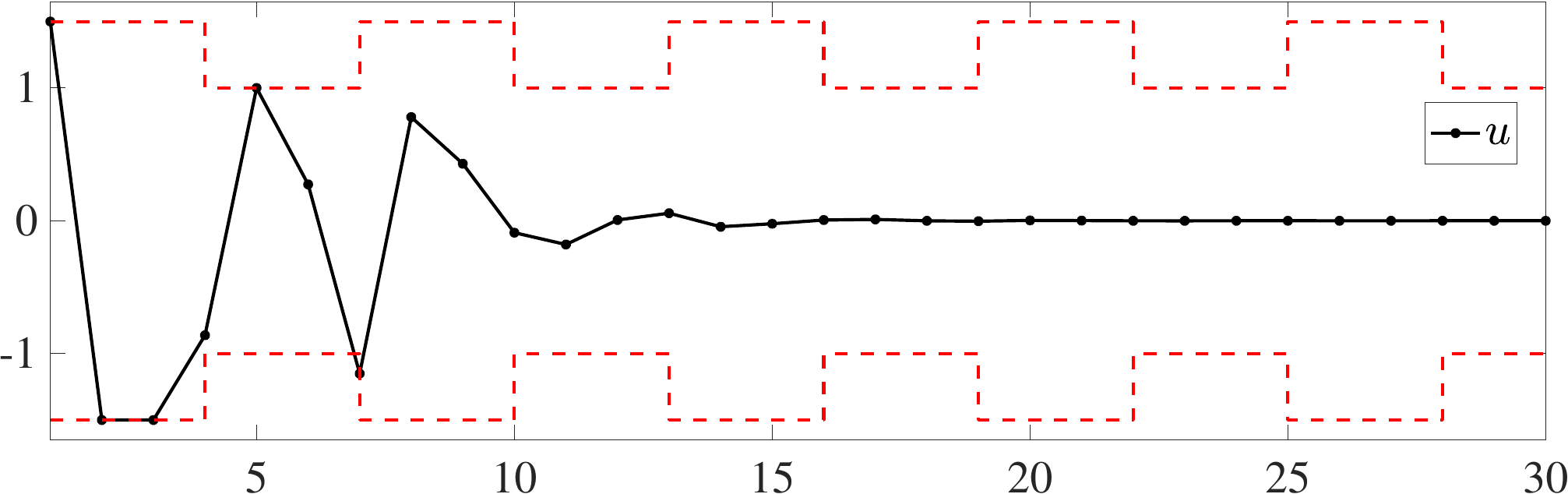}}\vspace{-0.3cm}
		\caption{State and input trajectories with the system initialized in subsystem 1}\label{fig:switching_state}
	\end{minipage}
\end{figure}\begin{figure}
	\begin{minipage}[b]{0.5\textwidth} 
		\centering
		\subfigure[State trajectories	\label{fig:state1}]{\includegraphics[width=0.85\linewidth]{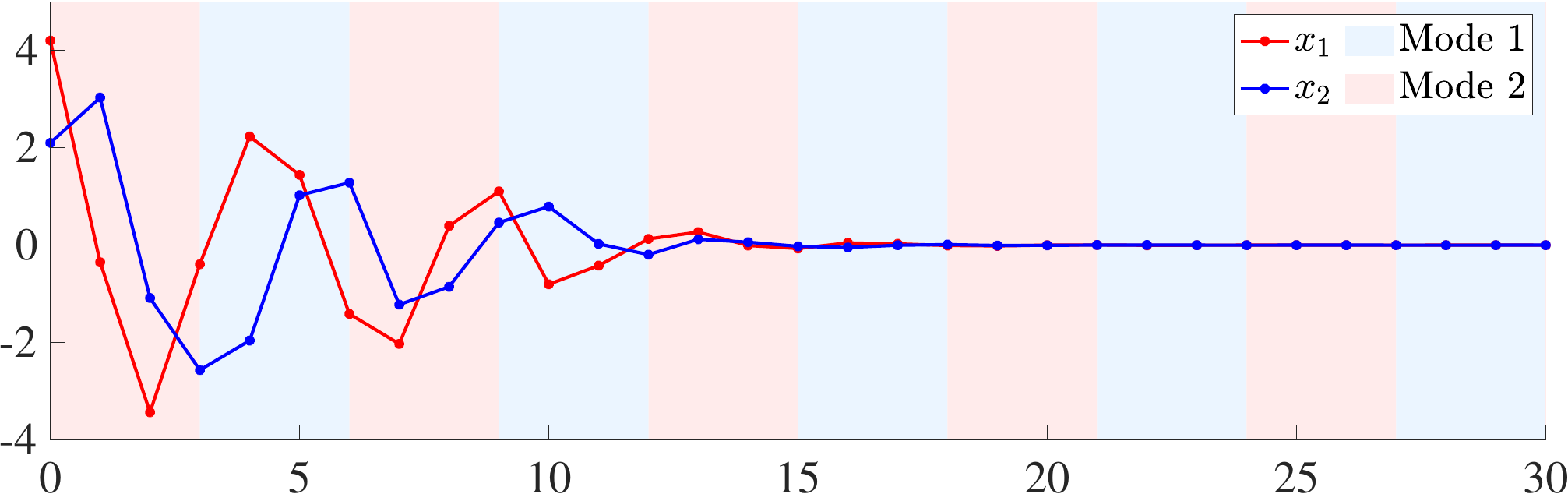}}
	\subfigure[Control input	\label{fig:switching1}]{\includegraphics[width=0.85\linewidth]{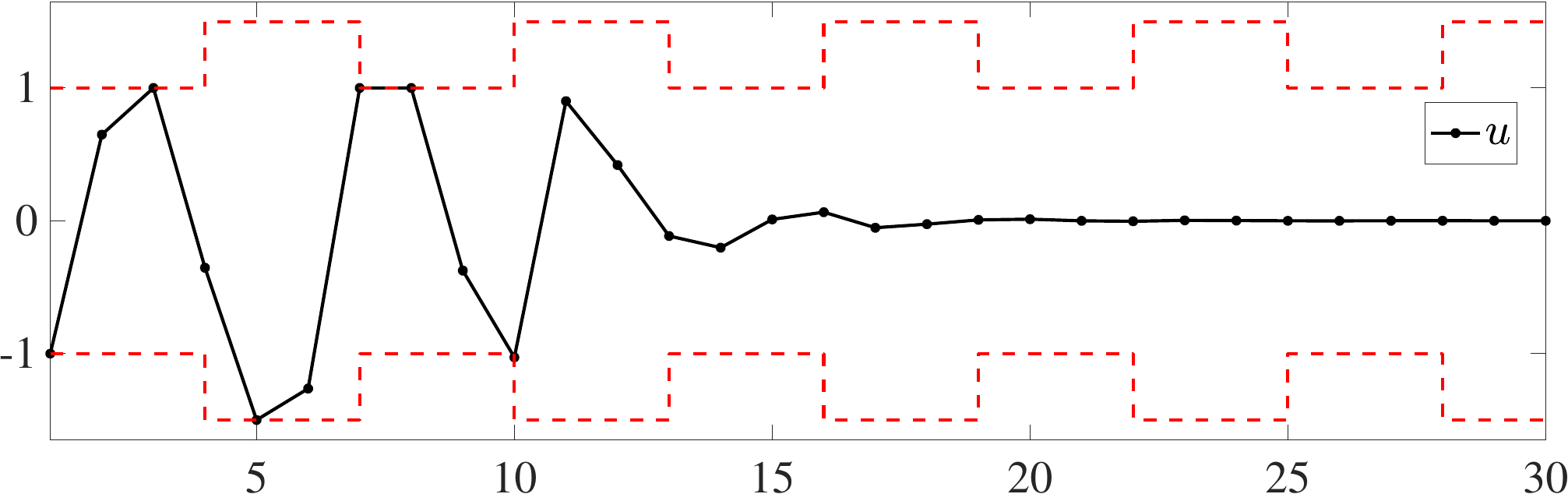}}\vspace{-0.3cm}
		\caption{State and input trajectories with the system initialized in subsystem 2}\label{fig:switching_state1}
	\end{minipage}
\end{figure}

\textit{$\bullet$ The effectiveness of the \textit{variable-horizon} switched MPC}: 
Fig. \ref{fig:stateset} depicts the domains of attraction derived from \textit{Algorithm \ref{result_alg_fs}}, superimposed with state trajectories starting from four distinct initial conditions. Under the mode-dependent dwell-time constraint, every initial state in the  domain of attraction is feasible for  the proposed switched MPC problem \eqref{MPC}.
Quantitatively, the switching feasible set has a volume of 38.5133 and is strictly contained within $\bigcap_{m \in \mathcal{M}} \mathcal{F}_m$, which has a volume of 38.7460.
\textcolor{black}{Figs. \ref{fig:switching_state} and \ref{fig:switching_state1} illustrate the state trajectories, control inputs, and the corresponding switching signals when the system is initialized in subsystem $1$ and subsystem $2$, respectively. As shown, all state trajectories successfully approach the origin over the $20$-step simulation.}

\textit{$\bullet$ Comparison with (\cite{OngMPC}}):
Fig. \ref{fig:campare} illustrates the domains of attraction obtained using the proposed method and the method in \cite{OngMPC}. $\mathcal{F}_m$ and $\mathcal{F}_{m}^{2016}$ represent the domains of attraction for the proposed method and the scheme in \cite{OngMPC}, respectively.
The key observations are as follows: 1) $\mathcal{F}_{m}{2016}$ is a strict subset of $\mathcal{F}_m$, meaning that every state admissible under the method in \cite{OngMPC} is also feasible for the proposed switched MPC (whereas the converse is false), which naturally results in a strictly larger volume for $\mathcal{F}_m$; 2) The proposed MPC scheme is computationally more efficient as it bypasses the max-min optimization required in \cite{OngMPC}.

\begin{figure}
	\centering
 	\includegraphics[width=0.65\linewidth]{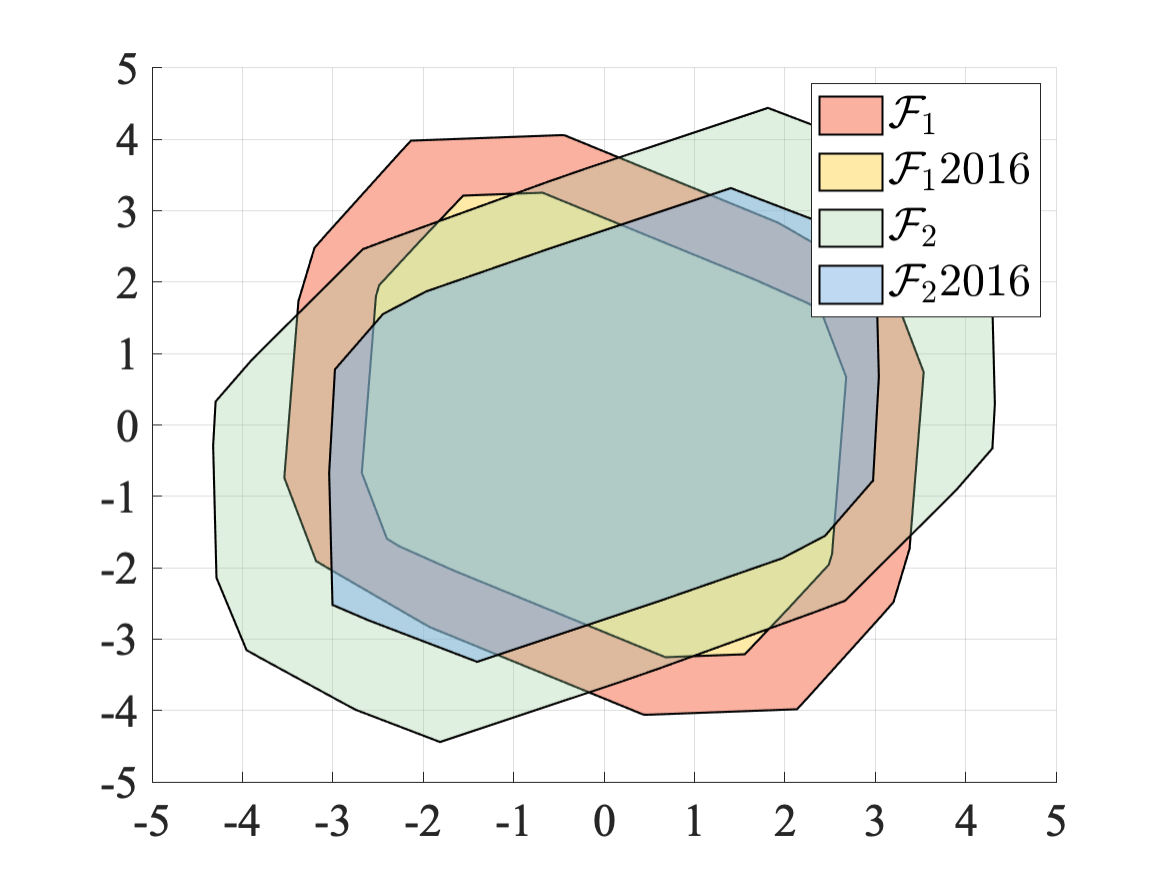}\vspace{-0.3cm}
	\caption{The domains of attraction using the MPC proposed in this paper and \cite{OngMPC} for switched system in this paper}\vspace{-0.2cm}
	\label{fig:campare}
\end{figure}
%

\begin{table}[b]
	\centering
	\caption{Comparison of dwell-time constraints using the method in \cite{zhang_switched_2016} and the proposed scheme}
	\label{table}  
	\renewcommand{\arraystretch}{1.1}  
	\begin{tabular}{l *{9}{c}}  
		\toprule
		\textbf{DT} & \multicolumn{8}{c}{Zhang et al. (2016)} & \textbf{Proposed} \\
		\cmidrule(lr){2-9} \cmidrule(lr){10-10}
		Stage & 1 & 2 & 3 & 4 & 5 & 6&7&$\ge 8$ & $\ge 1$ \\
		\midrule
		$\tau_{d,1}$ & 5 & 5 & 5 & 4 & 4 & 3&4&2 & 2 \\
		$\tau_{d,2}$ & 4 & 4 & 4 &2 & 2 & 2 &2&2 & 2 \\
		\bottomrule
	\end{tabular}
\end{table}

\textit{$\bullet$ Comparison with (\cite{zhang_switched_2016}}): 
The dwell-time constraints adopted here exhibit greater flexibility than \textcolor{black}{those} in \cite{zhang_switched_2016}. Table \ref{table} compares the dwell-time constraints required by the method in \cite{zhang_switched_2016} and the proposed approach. 
The constraints in \cite{zhang_switched_2016} are derived through complex set operations.  These operations increase the offline computational burden, while the requirement to ensure feasibility across all admissible mode transitions may lead to conservative dwell-time bounds.  Moreover, their calculation procedure is computationally demanding, as the dwell-time bounds are strongly coupled with the number of switches.
In contrast, the proposed MPC scheme yields less restrictive dwell-time constraints. 
Furthermore, when the system matrices from (\cite{zhang_switched_2016}) are evaluated within our framework, the required dwell time is significantly reduced to $\tau_{d,1}=2$ and $\tau_{d,2}=4$. Notably, these shortened durations would be completely infeasible under the stringent conditions established in \cite{zhang_switched_2016}, clearly demonstrating the expanded operational flexibility of our approach.

\begin{figure}
	\centering
	\subfigure[$\tau_1=\tau_2 = 2$]{\includegraphics[width=0.65\linewidth]{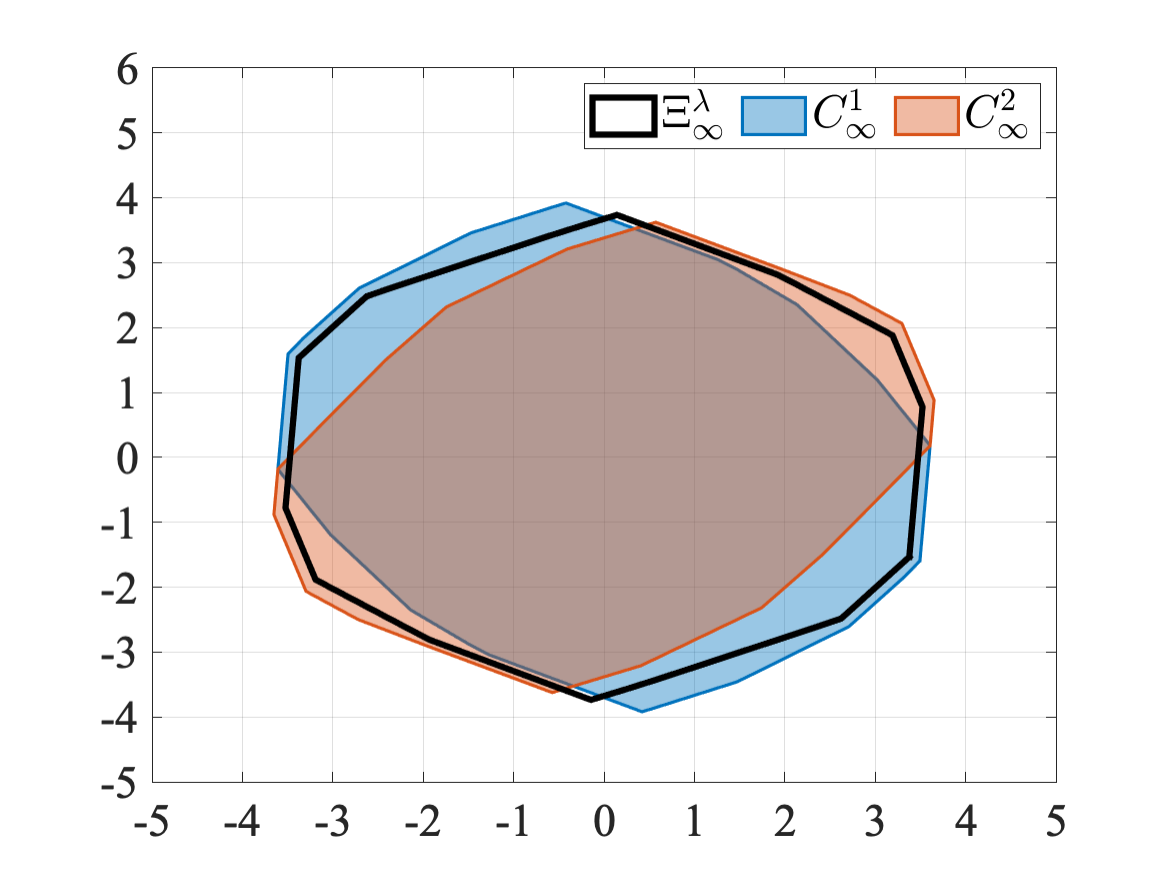}}\vspace{-0.2cm}
	\subfigure[$\tau_1 = 4,$ $\tau_2 = 5$]{\includegraphics[width=0.65\linewidth]{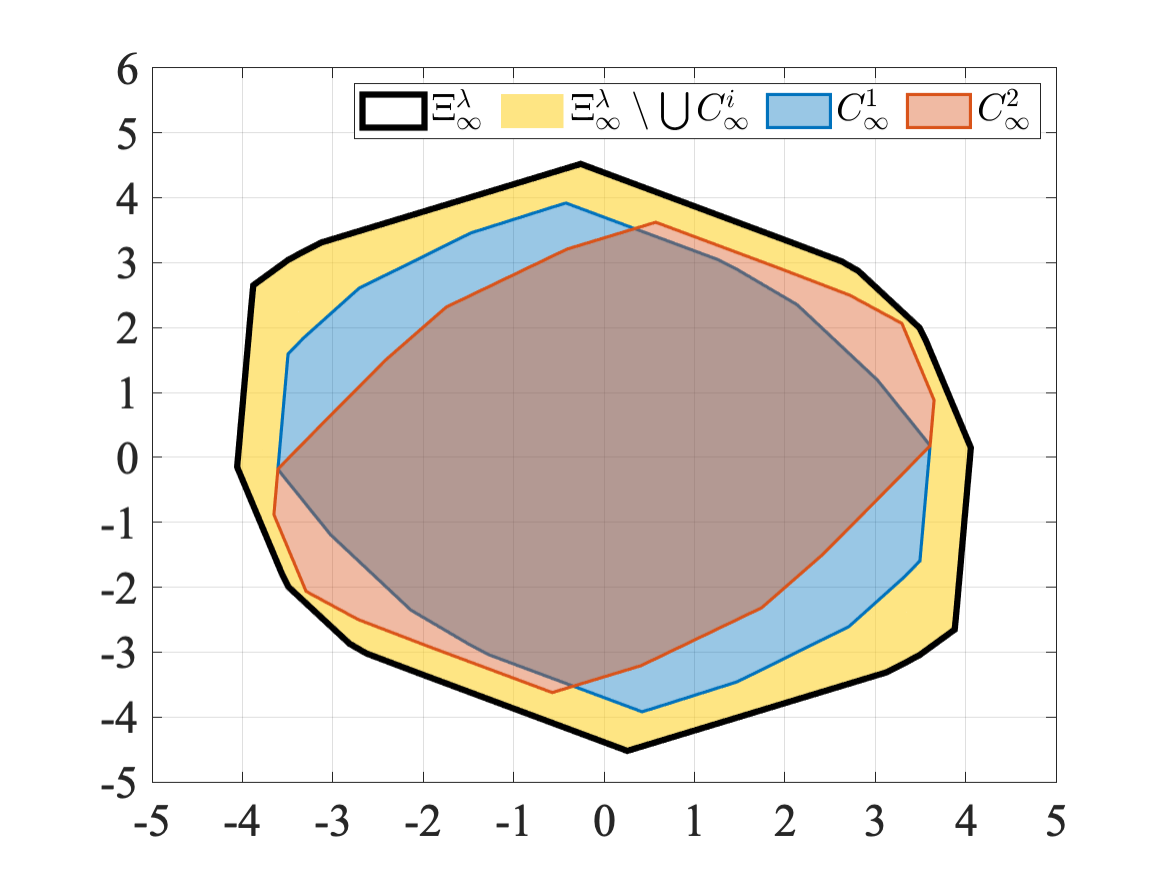}}\vspace{-0.3cm}
	\caption{\textcolor{black}{The switching feasible set $\Xi_{\infty}^{\lambda}$ of Algorithm 1 and the switch-RCI sets $C^i_{\infty}$ of Algorithm 1 in \cite{Danielson2019_switchRCI}}}\vspace{-0.2cm}
	\label{fig:main}
\end{figure}

{\color{black}
	\textit{$\bullet$ Comparison with (\cite{Danielson2019_switchRCI}}): 
	To visually assess the geometric properties of the feasible regions, Fig. \ref{fig:main} compares our switching feasible set $\Xi_{\infty}^{\lambda}$ with the exact switch-RCI sets $C_{\infty}^i$ computed by the method in \cite{Danielson2019_switchRCI}. As discussed in Remark \ref{remark_danielson_diff}, these two methods impose different set conditions, and    neither definition  implies a general inclusion relationship. 
	The switch-RCI construction enforces one-step control invariance, whereas the proposed method allows the state to temporarily leave \(\Xi_{\infty}^{\lambda}\) provided that  the physical constraints  remain satisfied and  the prescribed multi-step transition is completed. 
	The effect of dwell time can be observed by comparing the two parameter settings.  As seen in Fig. \ref{fig:main}(a), under short dwell-time constraints ($\tau_1=\tau_2=2$), the feasible regions of both methods are comparable. However, when the dwell time increases to $\tau_1=4$ and $\tau_2=5$ in Fig. \ref{fig:main}(b), the proposed switching feasible set expands, whereas the switch-RCI sets remain constrained by their one-step invariance conditions. In this example, the proposed algorithm uses the additional dwell-time steps to retain additional states in \(\Xi_\infty^\lambda\).}

\color{black}
\textit{$\bullet$ Comparative evaluation and computational complexity analysis}:
To evaluate the conservatism and computational efficiency of the proposed scheme, comprehensive comparisons were conducted against existing approaches (\cite{OngMPC, zhang_switched_2016, Danielson2019_switchRCI}), see Table \ref{tab:computation_time}.   Additional system parameters are provided in the Appendix. To ensure fairness, the prediction horizon for the reference methods was uniformly set to $N=5$. The reported offline computation times are averages over $10$ independent runs for method--scenario combinations whose set computations completed; an incomplete computation is reported as NC instead. For online MPC optimization, initial states were randomly sampled from the intersection of the initial feasible sets of all evaluated methods. For each $20$-step run, all applicable methods shared the sampled initial state and a switching signal derived from the most conservative constraints in \cite{zhang_switched_2016}; online time was averaged per step across runs.



Regarding  computational efficiency, the proposed offline phase requires slightly more time than that of  \cite{OngMPC}, owing to the explicit calculation of the switching feasible set. However, it exhibits a significant speed advantage over the methods in \cite{zhang_switched_2016} and \cite{Danielson2019_switchRCI}. This advantage becomes particularly prominent as the number of subsystems ($M$) or state dimensions ($n_x$) increases; notably, for $(M,n_x)=(2,3)$, the offline set computation of \cite{Danielson2019_switchRCI} had not converged when the 500-s time limit was reached; hence no completed offline time or online time is reported for this case. During the online phase, our computation time is marginally higher than the baseline in \cite{zhang_switched_2016}. This slight overhead is fundamentally attributed to the longer prediction horizon inherently supported by our method. If the horizon in \cite{zhang_switched_2016} is extended to match (e.g., $N=7$ for $M=2, n_x=2$), their online calculation time rises to $0.1709$ s, demonstrating  that our online efficiency remains highly competitive. Overall, the proposed method achieves a favorable offline-online computational trade-off in the tested configurations, reducing offline computation time relative to set-based methods while maintaining comparable online efficiency.

\begin{table}
	\centering
	\renewcommand{\arraystretch}{1.1} 
	\caption{Quantitative comparison of computation times}
	\label{tab:computation_time}
	\resizebox{\columnwidth}{!}{%
	\begin{tabular}{clrc} 
		\toprule
		\multicolumn{2}{c}{\textbf{Scenario / Method}} & \textbf{Offline(s)} & \textbf{Online(s)} \\
		\midrule
		\multirow{4}{*}{\makecell{$M = 2$; \\ $n_x = 2$}} 
		& Proposed & 7.185~~~~ & 0.1697\\
		& \cite{OngMPC} & {1.656}~~~~ & 0.3730 \\
		& \cite{zhang_switched_2016} & 29.051~~~~~& 0.1669 \\
		& \cite{Danielson2019_switchRCI} &    37.015~~~~ & 0.1662 \\
		\midrule
		\multirow{4}{*}{\makecell{$M = 4$; \\ $n_x = 2$}} 
		& Proposed & 8.133~~~~ & 0.1832 \\
		& \cite{OngMPC} & 3.068~~~~ & 1.6579 \\
		& \cite{zhang_switched_2016} & 34.434~~~~ & 0.1796 \\
		& \cite{Danielson2019_switchRCI} &  268.422~~~~ & 0.1789 \\
		\midrule
		\multirow{4}{*}{\makecell{$M = 2$; \\ $n_x = 3$}} 
		& Proposed &   250.150~~~~ & 0.1791 \\
		& \cite{OngMPC} & 80.614~~~~ & 0.3526 \\
		& \cite{zhang_switched_2016} & 496.464~~~~~& 0.1845 \\
		& \cite{Danielson2019_switchRCI} & NC (500 s) & N/A \\
		\bottomrule
	\end{tabular}
}
NC indicates that the offline set computation had not converged when the 500-s time limit was reached. N/A indicates that no online computation time is available for this case.
\end{table}

\color{black}

\textit{$\bullet$ The application of short-horizon switched MPC}: 
The domains of attraction reported in \cite{zhang_switched_2016}, configured with $N_{1}^{\max} = N_{2}^{\max} = 7$, are larger than those of the proposed standard variable-horizon MPC \eqref{MPC}. This conservative limitation is effectively overcome by the proposed \textit{short-horizon} switched MPC \eqref{SHMPC1}.
In Fig. \ref{fig:setstate}, $\hat{\mathcal{F}}_m$ and $\tilde{\mathcal{F}}_m^{\mathcal{N}_m^{*}}$ represent the domains of attraction obtained using the method in \cite{zhang_switched_2016} and the \textit{short-horizon} switched MPC, respectively.
The remaining parameters are specified as $N^{\min}_1=N^{\min}_2=5,~N^{\max}_1=N^{\max}_2=7,~ \mathcal{N}_1^* = 11,~ \text{and~} \mathcal{N}_2^*= 10.$ For the first switching instant, the dwell-time constraints $\hat{\tau}_{d,1}= 6$ and $\hat{\tau}_{d,2}= 5$ align with those in \cite{zhang_switched_2016}. 
Notably,  the \textit{short-horizon} switched MPC delivers strictly larger domains of attraction than those reported in \cite{zhang_switched_2016}. This demonstrates the enhanced capability of the proposed approach to enlarge the feasible region.
Finally, Fig. \ref{fig:switching_state1short} plots the state trajectories under the \textit{short-horizon} switched MPC \eqref{SHMPC1}, confirming the closed-loop asymptotic stability of the system.

\begin{figure}[htbp]
\centering
\includegraphics[width=0.6\linewidth]{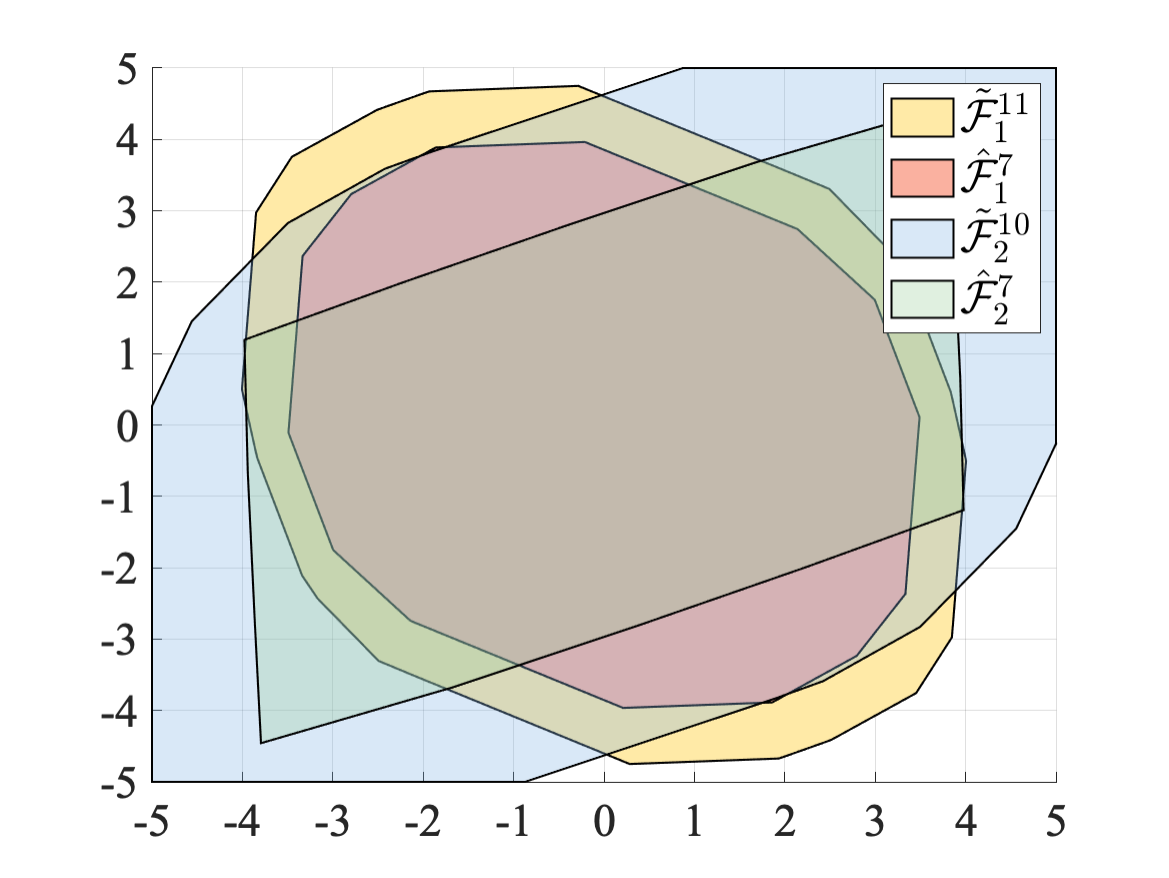}
\caption{The domains of attraction using $N^{\min}_1=N^{\min}_2=5$, $\mathcal{N}_1^* = 11$ and $\mathcal{N}_2^*= 10$. }\vspace{-0.2cm}
\label{fig:setstate}
\end{figure}

\begin{figure}[htbp]
\centering
\subfigure[State trajectories	\label{fig:state2short}]{\includegraphics[width=0.85\linewidth]{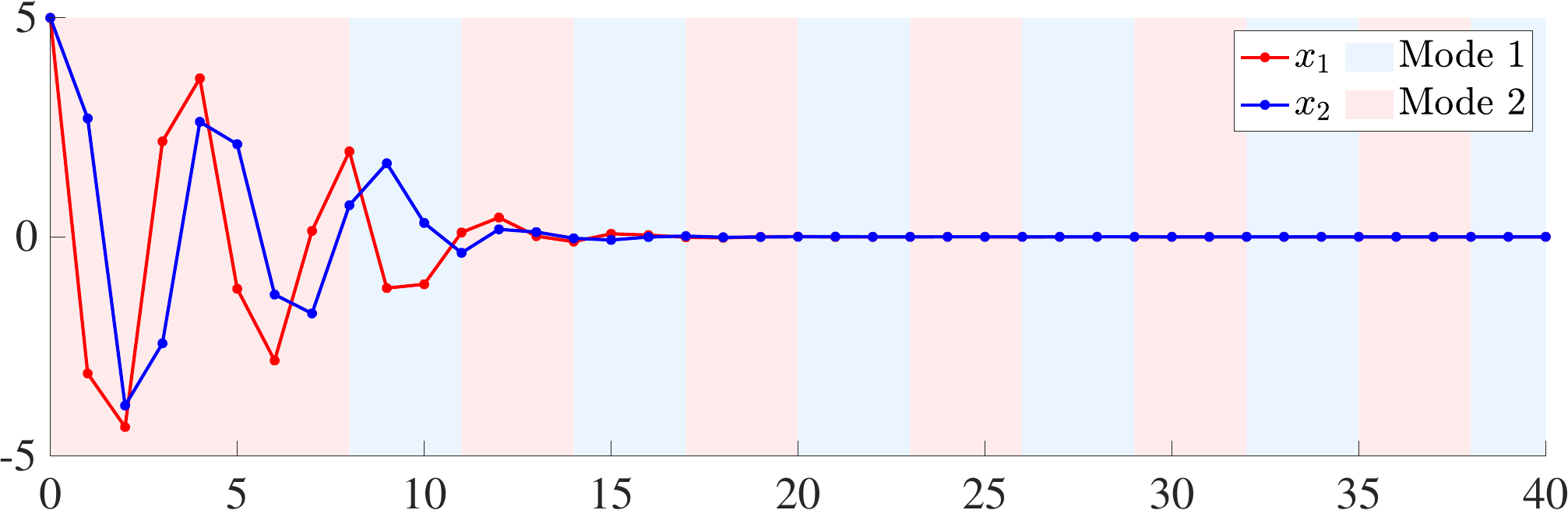}}
\subfigure[Control input	\label{fig:switching2short}]{\includegraphics[width=0.85\linewidth]{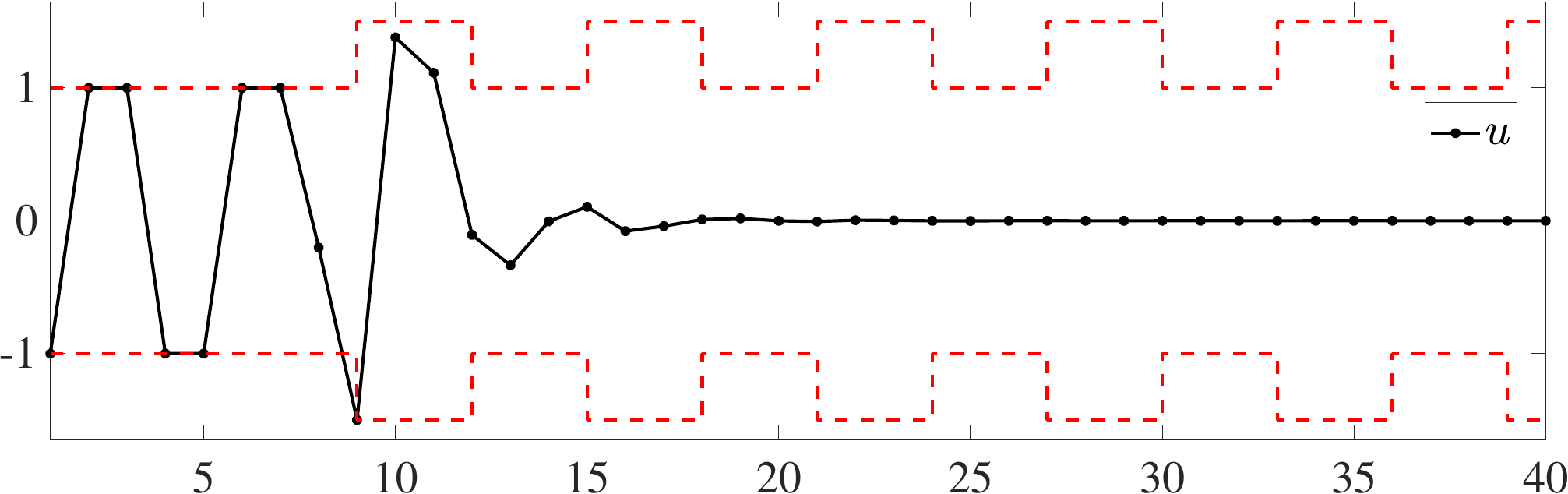}} \vspace{-0.3cm}
\caption{State and input trajectories under the short-horizon switched MPC}\label{fig:switching_state1short} \vspace{-0.2cm}
	\end{figure}

%
%

	\section{Conclusion} \label{sec:col}

	This paper presents a \textit{variable-horizon} switched MPC scheme for switched systems. An algorithm is developed to compute the switching feasible set. By leveraging this set, persistent feasibility is guaranteed, provided that the conditions regarding  the prediction horizon and dwell-time constraints are satisfied. The dwell-time conditions are derived from the corresponding unconstrained closed-loop subsystems, which also ensure the stability of  the switched systems under the proposed MPC. Furthermore, two \textit{short-horizon} switched MPC schemes are proposed. They can relax the constraint that dwell-time is less than the maximum prediction horizon, thereby expanding the domain of attraction. Simulations with comparisons \textcolor{black}{illustrate the performance} of the proposed MPC scheme for switched systems.



\bibliographystyle{apalike}   
\bibliography{mybibfile}           


\color{black}
\appendix
\section{System Parameters for Simulations}
\label{app:parameters}

\subsection{2D Switched System with four modes}
To evaluate the proposed algorithm on a system with more modes, two additional subsystems (Modes 3 and 4) are appended to the 2D system, given by 
\begin{align*}
	A_3 &= \begin{bmatrix} -0.0724 & 0.9009 \\ 0.6191 & 0.5519 \end{bmatrix}, \quad
	B_3 = \begin{bmatrix} 0.2 \\ -0.5 \end{bmatrix}, \\
	A_4 &= \begin{bmatrix} 0.8678 & 0.3657 \\ 0.8176 & -0.3591 \end{bmatrix}, \quad
	B_4 = \begin{bmatrix} -0.3 \\ 0.5 \end{bmatrix}.
\end{align*}
The state and control constraints are uniform for both modes, given by $\mathbb{X}_3 = \mathbb{X}_4 = \{ x \in \mathbb{R}^2 : \|x\|_{\infty} \leq 5\}$ and $\mathbb{U}_3 = \mathbb{U}_4 = \{ u \in \mathbb{R} : |u| \leq 1\}$. The weighting matrices are set to $Q_3 = Q_4 = I$ and $R_3 = R_4 = 4$. The terminal penalty matrices $P_m$ and feedback gains $K_m$ ($m \in \{3,4\}$) are derived via unconstrained LQR. The required dwell time is $\tau_{d,3} = \tau_{d,4} = 2$. The variable prediction horizons are bounded by $N^{\min}_3 = N^{\min}_4 = 5$ and $N^{\max}_3 = N^{\max}_4 = 7$.


\begin{figure*}[b]
	\centering
	\includegraphics[width=\linewidth]{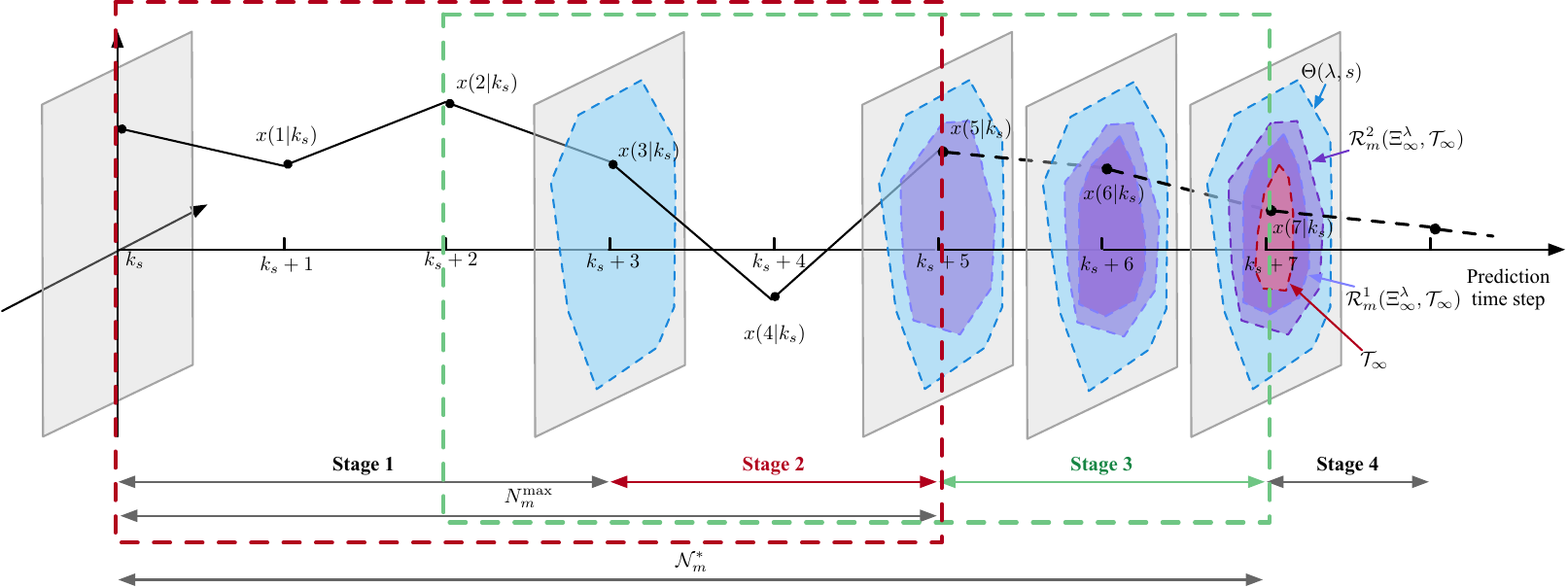}
	\caption{
		Multi-mode prediction paradigm of the short-horizon switched MPC with $\mathcal{N}_m^*=7$, $N_m^{\max}=5$, and $N_m^{\min}=2$.
		The red dashed box shows the prediction at the switching instant $k_s$, while the green dashed box (at $k_s+2$) coincides with the standard variable-horizon MPC in Fig.~\ref{fig:paradigm}.
		At $k_s$, the control sequence steers the predicted state into $\Theta(\lambda,s)$ from step $N_m^{\max}-N_m^{\min}$ onward, and forces the $N_m^{\max}$-step-ahead state into the predecessor set $\mathcal{R}_m^{\rho_k}(\Xi_\infty^\lambda,\mathcal{T}_\infty)$, ensuring arrival at $\mathcal{T}_\infty$ within $\mathcal{N}_m^*$ steps.
		The purple region realizes a transitional phase: although the prediction horizon is shorter than $\mathcal{N}_m^*$, the nested predecessor structure guarantees that the terminal requirement is met once the mode duration is sufficiently long.
		After entering $\mathcal{T}_\infty$, the predicted trajectory follows the local LQR law $u(k)=K_{\sigma(k)}x(k)$.
	}
	\label{fig:paradigmsh}
\end{figure*}
\subsection{3D Switched System}
For a higher-dimensional test, a 3D switched system comprising two subsystems is considered. The system dynamics are
\begin{equation*}
	A_1 = \begin{bmatrix} -0.8125 & 0.7000 & -0.9375 \\ -0.8125 & -0.2375 & 0.1250 \\ -1.2375 & 0.1250 & 0.9500 \end{bmatrix}, \quad
	B_1 = \begin{bmatrix} -1 \\ 1 \\ 1 \end{bmatrix},
\end{equation*}
\begin{equation*}
	A_2 = \begin{bmatrix} -0.7750 & 0.9375 & 0.0125 \\ 0.6125 & 1.1750 & -0.4375 \\ -0.6875 & 0.8000 & -0.4750 \end{bmatrix}, \quad
	B_2 = \begin{bmatrix} 1 \\ 1 \\ -0.5 \end{bmatrix}.
\end{equation*}
The state constraint sets are $\mathbb{X}_1 = \{ x \in \mathbb{R}^3 : \|x\|_{\infty} \leq 10\}$ and $\mathbb{X}_2 = \{ x \in \mathbb{R}^3 : \|x\|_{\infty} \leq 9\}$. The control inputs are bounded in $\mathbb{U}_1 = \{ u \in \mathbb{R} : \|u\| \leq 1.5\}$ and $\mathbb{U}_2 = \{ u \in \mathbb{R} : \|u\| \leq 2\}$. The LQR weighting matrices are specified as $Q_1 = 10I$, $Q_2 = 5I$, and $R_1 = R_2 = 1$, which similarly yield the corresponding $P_m$ and $K_m$. The horizon bounds are chosen as $N^{\min}_1 = N^{\min}_2 = 5$ and $N^{\max}_1 = N^{\max}_2 = 7$.

\section{Short-Horizon Switched MPC for {$\mathit{N^{\max}_m > \tau_{d,m}}$}}
A schematic diagram for the short-horizon switched MPC problem~\eqref{SHMPC1} is illustrated by Fig.~\ref{fig:paradigmsh}, with Table~\ref{table3} detailing the prediction horizon and state constraints at each step. Ideally, the state should enter the terminal set $\mathcal{T}_\infty$ within $\mathcal{N}_m^*$ steps. However, since the allowable maximum prediction horizon satisfies $N_m^{\max} < \mathcal{N}_m^*$, only a restricted prediction window (the red dashed box at $k_s$) is available at the switching instant. To bridge this temporal gap, intermediate transitional sets (the purple regions in Fig.~\ref{fig:paradigmsh}) are introduced as multi-step predecessor sets of $\mathcal{T}_\infty$.

The key distinction from the standard variable-horizon MPC (Fig.~\ref{fig:paradigm}) lies precisely in this transitional phase. Once the dwell time after the switch becomes sufficiently long (e.g., at $k_s+2$, represented by the green dashed box), the short-horizon scheme smoothly reduces to the standard variable-horizon MPC. A direct comparison between Tables~\ref{table1} and~\ref{table3} reveals that the terminal constraint in the short-horizon formulation is no longer $\mathcal{T}_\infty$ itself, but a recursively constructed predecessor set $\mathcal{R}_m^{\rho_k}(\Xi_\infty^\lambda,\mathcal{T}_\infty)$, which guarantees that the state reaches $\mathcal{T}_\infty$ exactly at step $\mathcal{N}_m^*$.

\begin{table*}[htbp]
	\centering
	\caption{Prediction horizon and state constraints at each instant and predicted step with $\mathcal{N}_m^* = 7$, $N^{\max}_m = 5$,  $N^{\min}_m = 2$ (see \eqref{SHMPC1})}
	\label{table3}  
	\renewcommand{\arraystretch}{1.2}  
	\begin{tabular}{c *{6}{c}}  
		\toprule
		& $k_s$ & $k_s+1$ & $k_s+2$ & $k_s+3$ & $k_s+4$ & $[k_s+5, k_{s+1})$ \\
		\midrule
		\textbf{Horizon} & 5 & 5 & 5 & 4 & 3 & 2 \\
		\textbf{Step 1} & $\mathbb{X}_m$ & $\mathbb{X}_m$ & $\Theta(\lambda,s)$ & $\Theta(\lambda,s)$ & $\Theta(\lambda,s)$ & $\Theta(\lambda,s)$ \\
		\textbf{Step 2} & $\mathbb{X}_m$ & $\Theta(\lambda,s)$ & $\Theta(\lambda,s)$ & $\Theta(\lambda,s)$ & $\Theta(\lambda,s)$ & $\mathcal{T}_\infty$ \\
		\textbf{Step 3} & $\Theta(\lambda,s)$ & $\Theta(\lambda,s)$ & $\Theta(\lambda,s)$ & $\Theta(\lambda,s)$ & $\mathcal{T}_\infty$ & N/A \\
		\textbf{Step 4} & $\Theta(\lambda,s)$ & $\Theta(\lambda,s)$ & $\Theta(\lambda,s)$ & $\mathcal{T}_\infty$ & N/A & N/A \\
		\textbf{Step 5} & $\mathcal{R}^2_m(\Xi_{\infty}^{\lambda},\mathcal{T}_\infty)$ & $\mathcal{R}^1_m(\Xi_{\infty}^{\lambda},\mathcal{T}_\infty)$ & $\mathcal{T}_\infty$ & N/A & N/A & N/A \\
		\bottomrule
	\end{tabular}
\end{table*}

\begin{table*}
	\centering
	\caption{Prediction horizon and state constraints at each instant and predicted step, with $\mathcal{N}_m^* = 8$, $\underline{\mathcal{N}}_m^* = 3$, $N^{\max}_m = 4$, and $N^{\min}_m = 2$ (see \eqref{SHMPC2})}
	\label{table4}
	\renewcommand{\arraystretch}{1.2} 
	\begin{tabular}{c *{7}{c}}  
		\toprule
		& $k_s$ & $k_s+1$ & $k_s+2$ & $k_s+3$ & $k_s+4$ & $k_s+5$ & $[k_s+6, k_{s+1})$ \\
		\midrule
		\textbf{Horizon} & 4 & 4 & 4 & 4 & 4 & 3 & 2 \\
		\textbf{Step 1} & $\mathbb{X}_m$ & $\mathbb{X}_m$ & $\mathbb{X}_m$ & $\mathbb{X}_m$ & $\Theta(\lambda,s)$ & $\Theta(\lambda,s)$ & $\Theta(\lambda,s)$ \\
		\textbf{Step 2} & $\mathbb{X}_m$ & $\mathbb{X}_m$ & $\mathbb{X}_m$ & $\Theta(\lambda,s)$ & $\Theta(\lambda,s)$ & $\Theta(\lambda,s)$ & $\mathcal{T}_\infty$ \\
		\textbf{Step 3} & $\mathbb{X}_m$ & $\mathbb{X}_m$ & $\Theta(\lambda,s)$ & $\Theta(\lambda,s)$ & $\Theta(\lambda,s)$ & $\mathcal{T}_\infty$ & N/A \\
		\textbf{Step 4} & ${\mathcal{R}}^1_m(\mathbb{X}_m, \mathcal{R}_m^3(\Xi_{\infty}^{\lambda},\mathcal{T}_{\infty}))$ & $\mathcal{R}_m^3(\Xi_\infty^\lambda, \mathcal{T}_\infty)$ & $\mathcal{R}_m^2(\Xi_\infty^\lambda, \mathcal{T}_\infty)$ & $\mathcal{R}_m^1(\Xi_\infty^\lambda, \mathcal{T}_\infty)$ & $\mathcal{T}_\infty$ & N/A & N/A \\
		\bottomrule
	\end{tabular}
\end{table*}

\begin{pf}
	\textbf{Persistent Feasibility:} 
	During the interval $[k_0, k_0 + \mathcal{N}_{m} - N^{\max}_{m})$, one has $\mathcal{D}(k) < \mathcal{N}_{m} - N^{\max}_{m}$ and \eqref{SHMPC1} reduces to a fixed-horizon MPC with horizon length $N_{m}^{\max}$ where $m = \sigma(k_0)$. 
	By the definition, $\mathcal{R}_m^{\delta_k}(\Xi_{\infty}^{\lambda},\mathcal{T}_{\infty})$ ensures that an admissible control drives the predicted state into $\mathcal{R}_m^{\delta_{k+1}}(\Xi_{\infty}^{\lambda},\mathcal{T}_{\infty}) = \mathcal{R}_m^{\delta_k - 1}(\Xi_{\infty}^{\lambda},\mathcal{T}_{\infty})$. 
	Consequently, feasibility at instant $k$ guarantees the feasibility of the problem at instant $k+1$. Iterating this logic over the initial interval ensures recursive feasibility, while feasibility for subsequent steps follows directly from Theorem \ref{thm_fea}.

	\textbf{Stability:}	
	The feasibility of \eqref{SHMPC1} requires an input sequence $u_{i|k} \in \mathbb{U}_{m}$ $(i \in \mathbb{Z}_{[0, N_{m}^{\max} - 1]})$ such that 1) $x_{i|k} \in \mathbb{X}_{m}$ $(i \in \mathbb{Z}_{[1, \Delta \mathcal{N}^*_m- 1]})$; 2) $x_{N_{m}^{\max}|k} \in \mathcal{R}_m^{\delta_k}(\Xi_{\infty}^{\lambda},\mathcal{T}_{\infty})$.  
	By the recursive nature of the predecessor operator and the construction of Algorithm \ref{result_alg_fs}, the set of all initial states $x_{0|k}$ satisfying these exact multi-step constraints is precisely the domain of attraction $\mathcal{F}_m$. 
	Thus, problem \eqref{SHMPC1} is mathematically equivalent to the standard switched MPC problem \eqref{MPC} configured with maximum horizon $\mathcal{N}^*_m$ and minimum horizon $N^{\min}_m$. By Theorem \ref{result_thm_sta}, asymptotic stability is guaranteed. \qed
\end{pf}

\section{Short-Horizon Switched MPC for {$\mathit{N^{\max}_m \le \tau_{d,m}}$}}
Table~\ref{tab:dwell_time_comparison} compares the required minimum dwell-time constraints obtained from the method in Zhang et al. (2016) and the proposed short-horizon switched MPC scheme for a switched linear system with four subsystems.
A further comparison with the two-subsystem case in Table~\ref{table} reveals a notable limitation of the method of Zhang et al. (2016).
Even when the first two subsystem matrices $(A_1,B_1)$ and $(A_2,B_2)$ remain identical, the dwell-time constraints computed by Zhang et al. (2016) change significantly simply due to the increase in the total number of subsystems.
More critically, these variations exhibit no clear monotonicity or physical interpretability.
For example, in the two-subsystem scenario, $\tau_{d,1}$ eventually converges to $2$ after stage~8,
whereas in the four-subsystem case, $\tau_{d,1}$ fluctuates between $2$ and $7$ across different stages despite the unchanged subsystem dynamics.
This phenomenon indicates that the conservatism of Zhang et al. (2016) originates not from the intrinsic properties of individual subsystems, but from the combinatorial complexity of multi-step reachable-set operations across all modes.

In sharp contrast, the proposed scheme yields a consistent dwell-time requirement $\tau_{d,i}=2$ for all subsystems, regardless of the total number of modes or the prediction stage. This stage-invariant and subsystem-independent result demonstrates that the proposed method successfully decouples the dwell-time condition from the combinatorial explosion inherent in switched systems, thereby offering significantly improved scalability and switching flexibility.

\begin{table}
	\centering
	\caption{Comparison of dwell-time constraints using the method in Zhang et al. (2016) and the proposed scheme}
	\label{tab:dwell_time_comparison}
	\renewcommand{\arraystretch}{1}  
	\begin{tabular}{lcccccccc}
		\toprule
		\textbf{DT} & \multicolumn{7}{c}{Zhang et al. (2016)} & \textbf{Proposed} \\
		\cmidrule(lr){2-8} \cmidrule(lr){9-9}
		Stage & 1 & 2 & 3 & 4 & 5 & 6& $\geq 7$& $\ge 1$ \\
		\midrule
		$\tau_{d,1}$ & 7 & 3  & 3  &  4 &  4 & 4  & 2 & 2 \\
		$\tau_{d,2}$ & 7  & 6  &  6 &  4 &2  & 4  &2&2  \\
		$\tau_{d,3}$ &  6 &  6 &   5&  4 &  4 & 4  &4 &2 \\
		$\tau_{d,4}$ & 2  &2   &  2 &2   &2   &  2 &  2&2 \\
		\bottomrule
	\end{tabular}
\end{table}


\end{document}